\newtheorem{observation}{Observation}
\newcommand{\bids}{\mathcal{B}}
\newcommand{\learnt}{\mathcal{L}}
\newcommand{\R}{\mathbb{R}}
\newcommand{\Rn}{\mathbb{R}^n}
\newcommand{\Z}{\mathbb{Z}}
\newcommand{\Zn}{\mathbb{Z}^n}
\newcommand{\uR}{\underline{\mathbb{R}}}
\newcommand{\qoracle}{\mathcal{Q}_\bids}
\newcommand{\sq}{\textnormal{SQ}}
\DeclareMathOperator{\conv}{\textnormal{conv}}
\DeclareMathOperator{\dom}{\textnormal{dom}}
\DeclareMathOperator{\supp}{\textnormal{supp}}
\DeclareMathOperator*{\argmax}{\textnormal{arg\,max}}
\renewcommand{\vec}[1]{\bm{#1}}
\newcommand{\bid}{\vec{b}}
\newcommand{\eb}{\vec{e}}
\newcommand{\pb}{\vec{p}}
\newcommand{\qb}{\vec{q}}
\newcommand{\vb}{\vec{v}}
\newcommand{\xb}{\vec{x}}
\newcommand{\yb}{\vec{y}}
\newcommand{\sbold}{\vec{s}}
\newcommand{\lb}{\vec{l}}
\newcommand{\algorithmicinput}{\textbf{Input:}}
\newcommand{\INPUT}{\item[\algorithmicinput]}
\newcommand{\algorithmicinit}{\textbf{Initialisation:}}
\newcommand{\INITIALIZATION}{\item[\algorithmicinit]}
\begin{document}
\title{Learning Strong Substitutes Demand via Queries}
\author{Paul W. Goldberg}
\email{paul.goldberg@cs.ox.ac.uk}
\orcid{0000-0002-5436-7890}
\author{Edwin Lock}
\email{edwin.lock@cs.ox.ac.uk}
\orcid{0000-0002-0604-2602}
\author{Francisco J. Marmolejo-Cossío}
\email{francisco.marmolejo@cs.ox.ac.uk}
\orcid{0000-0003-3219-7963}
\affiliation{%
    \institution{University of Oxford}
    \department{Department of Computer Science}
    \city{Oxford}
    \country{United Kingdom}
}


\begin{abstract}
This paper addresses the computational challenges of learning strong substitutes demand when given access to a demand (or valuation) oracle. Strong substitutes demand generalises the well-studied gross substitutes demand to a multi-unit setting.  Recent work by Baldwin and Klemperer shows that any such demand can be expressed in a natural way as a finite list of weighted bid vectors. A simplified version of this bidding language has been used by the Bank of England.

Assuming access to a demand oracle, we provide an algorithm that computes the unique list of weighted bid vectors corresponding to a bidder's demand preferences. In the special case where their demand can be expressed using positive bids only, we have an efficient algorithm that learns this list in linear time. We also show super-polynomial lower bounds on the query complexity of computing the list of bids in the general case where bids may be positive and negative. Our algorithms constitute the first systematic approach for bidders to construct a bid list corresponding to non-trivial demand, allowing them to participate in `product-mix' auctions.
\end{abstract}

\begin{CCSXML}
<ccs2012>
   <concept>
       <concept_id>10003752.10010070.10010099.10010107</concept_id>
       <concept_desc>Theory of computation~Computational pricing and auctions</concept_desc>
       <concept_significance>500</concept_significance>
       </concept>
   <concept>
       <concept_id>10003752.10010070.10010099.10010101</concept_id>
       <concept_desc>Theory of computation~Algorithmic mechanism design</concept_desc>
       <concept_significance>300</concept_significance>
       </concept>
 </ccs2012>
\end{CCSXML}

\ccsdesc[500]{Theory of computation~Computational pricing and auctions}
\ccsdesc[300]{Theory of computation~Algorithmic mechanism design}

\maketitle

\section{Introduction}
The Product-Mix Auction \cite{Kle2008, Kle2010, Kle2018} was devised by Klemperer as a means of providing liquidity to commercial banks and has been used regularly by the Bank of England since~2011. In it, there are a number of distinct {goods} available in multiple discrete units, and a set of buyers who express {strong substitutes} demands amongst these goods.%
\footnote{In the banking context, the goods correspond to liquidity secured against alternative kinds of collateral. Commercial banks pay for liquidity `products' by committing to interest rates. The values of the bids submitted by the commercial banks correspond to the interest rates they are willing to pay.}
Given these strong substitutes constraints on the bidders' demand, it is possible to compute market-clearing prices and allocations, in the sense that all buyers receive an allocation that they demand at those prices, and all goods are sold \cite{BGKL19}. The strong substitutes property guarantees the existence of a competitive equilibrium.

Importantly for the present paper, the auction introduces a novel bidding language in which buyers express their demands in terms of lists of bids, where each bid consists of a price vector (one price for each good) and a weight. Any bid $\bid$ is understood as a willingness to buy some quantity of goods (the weight of $\bid$), and for each good $i$ a price $b_i$ is offered. A bid is rejected if all prices offered are lower than the market-clearing prices of the corresponding goods, otherwise it is accepted on some good that maximises the price offered minus the market-clearing price.\footnote{In this way, a single bid with vector $\bid$ and weight $1$ is functionally equivalent to a unit-demand bidder with valuation $\bid$ and quasi-linear utility.} The auction currently run by the Bank of England only permits bidders to submit bids with one non-zero vector entry and positive weights, and it is straightforward to show that any such list of such positive bids has the strong substitutes property. Conversely, it has subsequently been shown that \emph{any} strong-substitutes demand function can be uniquely represented as a list of bids with positive \emph{and} negative weights \cite{Bal-Kle}.

While this gives the buyer a general-purpose means of communicating any strong substitutes demand, the buyer faces the problem of expressing her demand in this language. It may be easier for a buyer to answer queries of the form ``What bundle would you demand, given the following per-unit prices of goods?''. In this paper, we develop query protocols that assist a buyer in constructing her demand function based on a sequence of such queries. Given an unknown demand function, our algorithms are assumed to have access to a demand oracle: for any given prices for goods, the algorithms can learn a bundle of goods demanded at those prices. We are interested in minimising the number of queries to the demand oracle.

\subsection{Our Contributions}
This paper addresses the computational challenges of learning strong substitutes demand when given access to a demand oracle. Under the mild assumption that bidders are able to answer questions of the form ``What bundle do you demand at the following per-unit prices?'', our algorithms constitute the first systematic approach for bidders to generate a bid list corresponding to their demand, allowing them to participate in Product-Mix Auctions with non-trivial demand preferences. We provide upper and lower bounds on the query complexity of learning demand preferences and expressing these using the bidding language of the Product-Mix Auction, which is able to encode any strong substitutes demand in a conceptually simple and natural fashion.

Section \ref{sec:preliminaries} outlines three complementary characterisations of the strong substitutes property and introduces the bidding language both algebraically and geometrically. A first result of this paper, given in Section~\ref{sec:simulating}, is to show that demand oracles are not unreasonably powerful: when given access instead to a valuation oracle, it is possible to simulate a demand oracle with $O(n^3 \log(W/n))$ valuation queries, where $n$ is the number of goods and $W$ is the maximum weight of a bid vector.

In Sections~\ref{sec:positivebids} and~\ref{sec:generalbids}, we consider algorithms that learn the unique bid list corresponding to a bidder's demand. The algorithm in Section~\ref{sec:positivebids} learns demands that can be represented by lists of positive bids, and has linear query complexity. In the setting where demand may require positive and negative bids to express, we provide an exponential-cost algorithm that proceeds by learning all hyperplanes that contain facets of the Locus of Indifference Prices (LIP), a geometric object introduced by Baldwin and Klemperer \cite{BK} to characterise~demand.

Finally, in Section~\ref{sec:lowerbounds} we consider lower bounds on the query complexity of learning bid lists. We note briefly that $\Omega(B \log M)$ queries are required to learn a list of $B$ positive bids, where $M$ is the magnitude of the bid vectors w.r.t.~the $L_\infty$ norm. In order to identify the dependence on the number of goods $n$, we construct an adversarial game using a novel `island gadget' consisting of bids with weight $\pm 1$. Crucially, the island gadget only changes demand in a local region. For fixed $n$, we identify the overall query complexity of learning bid lists corresponding to strong substitutes demand as $\Theta(B \log M + B^n)$.

\subsection{Related Work}\label{sec:related}
Our work relates to the theory of preference elicitation. In this setting, a centralised agent, such as an auctioneer, wishes to identify an optimal allocation of goods via queries to participants' preferences. Queries typically take the form of value queries, where an agent reports a valuation for a given bundle of goods, or demand queries, where an agent reports a bundle that is demanded at given prices. This paper focuses on using demand queries to learn the bid  list representation of strong substitutes demand preferences. This representation can then be used to compute an optimal allocation of goods to agents via the methods of~\cite{BGKL19}.


Much early work in preference elicitation highlights the deep connections to exact learning via membership and equivalence queries from computational learning theory, and our results can also be viewed through this lens. Some notable examples include \cite{zinkevich2003polynomial, BJSZ04}. The authors of \cite{ConenSandholm01} explore the use of ranking oracles to exploit the topological structure of bidder preferences to learn optimal allocations. This approach is extended in \cite{ConenSandholm02differential, ConenSandholm02partial}, and verified empirically in \cite{hudson2004effectiveness}. The authors of \cite{NisanSegal06} explore the communication complexity of preference elicitation in combinatorial auctions, where they show that for general valuations, finding a value-maximising allocation requires an exponential communication cost in the number of items. In \cite{LahaieParkes04}, the authors explore connections between preference elicitation and exact learning, but they demonstrate that the representation length of the valuation is an important parameter in the query complexity of computing optimal allocations. This dependence on the representation length provides a way of side-stepping lower bounds from \cite{NisanSegal06}, and further justifies the need for succinct yet expressive bidding languages, as explored further in \cite{nisan2000bidding}.

Our problem is conceptually similar to a problem studied recently in \cite{ZC20}, in which the authors also consider algorithms with access to demand queries, and attempt to learn the underlying valuation that gives rise to the demand correspondence. The main difference between their work and ours however is that they consider a different class of value functions. In \cite{ZC20}, there are $n$ goods, each in unit supply (whereas we allow multiple copies of goods), and the buyer wants at most $k$ goods, and has additive valuations (whereas our strong substitutes valuations are more general).

\section{Preliminaries}
\label{sec:preliminaries}
We denote $[n] \coloneqq \{1, \ldots, n\}$ and $[n]_0 \coloneqq \{0, \ldots, n\}$. In our auction model, there are $n$ distinct goods numbered from $1$ to $n$; a single copy of a good is an \emph{item}. A \emph{bundle} of goods, typically denoted $\xb$ or $\yb$ in this paper, is a vector in $\Zn_+$ whose $i$-th entry denotes the number of items of good $i$. Vectors $\pb, \qb \in \mathbb{R}^n$ typically denote vectors of prices, with a price entry for each of the $n$ goods. We write $\pb \leq \qb$ when the inequality holds component-wise. Occasionally, it is convenient to work with a notional \emph{reject good} 0 for which prices are always zero. The set of goods is then $[n]_0$ and we identify bundles and prices with the $n+1$-dimensional vectors obtained by adding a $0$-th entry of value 0. For any subset $X \subseteq [n]$, $\eb^X$ denotes the characteristic vector of $X$, i.e.~an $n$-dimensional vector whose $i$-th entry is 1 if $i \in X$, and 0 otherwise. Furthermore, $\eb^i$ denotes the vector whose $i$-th entry is~1 and other entries are 0.

For any vector $\vb \in \mathbb{R}^n$, the $L_1$ and $L_\infty$ norms are defined as $\| \vb \|_1 = \sum_{i \in [n]} |v_i|$ and $\| \vb \|_\infty = \max_{i \in [n]} |v_i|$. The $L_\infty$ ball $B^\varepsilon_{\pb}$ of radius $\varepsilon$ at $\pb$ consists of all points $\qb \in \mathbb{R}^n$ that satisfy $\|\pb-\qb\|_\infty \leq \varepsilon$; note that this a hypercube with edge length $2\varepsilon$. Any hypercube centred at $\pb$ can be partitioned into $2^n$ \emph{orthants}, where every orthant $O_a$ is described by some vector $\vec{a} \in \{-1,1 \}^n$ and consists of the set of points $O_a = \pb + \left \{ \xb \in \mathbb{R}^n \mid a_i x_i \geq 0, \forall i \in [n] \right \}$.%
\footnote{Orthants in $n$-dimensional space generalise the notion of quadrants and octants in two- and three-dimensional space, respectively.}
Every such orthant can be triangulated into $n!$ simplices as follows. For every ordering $[i_1, \ldots, i_n]$ of the indices $[n]$, we define a simplex as the set of points in the orthant that satisfy $a_{i_1}x_{i_1} \leq a_{i_2}x_{i_2} \leq \ldots \leq a_{i_n}x_{i_n}$.

\subsection{Strong-Substitutes Demand Preferences}
\label{sec:SS-demand}
Throughout, we assume that bidders have quasi-linear \emph{strong substitutes} (SS) demand. The SS property is appealing because it is a generalisation of \emph{gross substitutes} (GS) from the single-unit setting that guarantees the existence of a competitive equilibrium in multi-unit auction markets.
We first present a characterisation of SS by Shioura and Tamura \cite{ShiouraTamura2015} that elucidates the relationship between GS and SS before introducing the two equivalent characterisations that underpin our algorithmic results and draw from tropical geometry and discrete convex analysis, respectively. For a detailed survey on the relationship between GS and SS, we refer to Shioura and Tamura \cite{ShiouraTamura2015}.

We assume that bidders have an implicit \emph{valuation} $v : A \to \R$ for bundles of goods, where $A \subset \Zn_+$ is a finite set. This is equivalent to defining the valuation as $v : \Zn \to \uR$, where $\uR \coloneqq \mathbb{R} \cup \{-\infty \}$ denotes the partially extended reals, and we assume that the \emph{effective domain} ${\dom v = \{ \xb \in \Zn \mid v(x) > - \infty \}}$ of $v$ is finite and non-negative in the sense that $\xb \geq \bm{0}$ for all $\xb \in \dom v$. Moreover, bidders have quasi-linear utilities, i.e.~the utility they derive from bundle $\xb$ at prices $\pb$ is $u(\xb; \pb) \coloneqq v(\xb) - \pb \cdot \xb$. A bidder's \emph{demand correspondence} $D_v$ maps prices~$\pb$ to the set of bundles~$\xb \in A$ that maximise $u(\xb; \pb)$ for this $\pb$.

Definition~\ref{def:strong-substitutes} captures how GS and SS demand changes when we (weakly) increase prices. Intuitively, the GS property states that the bidder's demand for those goods with unchanged prices does not decrease, while the law of aggregate demand (LAD) guarantees that the overall number of items that are demanded does not increase. The SS property combines the GS property with LAD.

\begin{definition}[cf.~\cite{ShiouraTamura2015}]
\label{def:strong-substitutes}
A demand correspondence $D_v$ is \emph{gross substitutes} (GS) if, for any prices $\pb' \geq \pb$ with $D_v(\pb)=\{\xb\}$ and $D_v(\pb')=\{\xb'\}$, we have $x'_k \geq x_k$ for all $k$ such that~$p_k=p'_k$. $D_v$ is \emph{strong substitutes} (SS) if $\xb$ and $\xb'$ additionally satisfy $\|\xb'\|_1 \leq \|\xb\|_1$ (the law of aggregate demand).
\end{definition}

\subsubsection{Geometric Approach}\label{sec:lip}
We give some geometric intuition for strong substitutes demand correspondences that underpins the algorithmic ideas in this paper. It is well-known that any quasi-linear demand divides price space into piecewise-linear convex regions corresponding to bundles. When demand is SS, each such region is a convex lattice~\cite{Murota2013}. Figure~\ref{fig:polyhedral-complex} illustrates this.

Recently, Baldwin and Klemperer \cite{BK} proposed a new way of characterising demand types. Borrowing from the tropical geometry literature, they introduce the \emph{Locus of Indifference Prices} (LIP), a piecewise-linear geometric object consisting of the set of all prices at which the bidder is indifferent between two or more bundles. They show that the LIP corresponds in a natural way to a polyhedral complex with $n-1$-dimensional facets. In Figure \ref{fig:polyhedral-complex}, the LIP is drawn using dashed lines. Noting that the orientation of the separating facet between two adjacent demand regions characterises how demand changes when moving from one region to the other, Baldwin and Klemperer~\cite{BK} propose a new way of defining demand types by the set of facet-normal vectors of the LIP. In this new paradigm, the strong substitutes demand type is defined as the family of demand correspondences whose LIP facets are normal to $\eb^i$ or $\eb^i - \eb^j$ for some $i,j \in [n]$.
In two dimensions, facets of SS LIPs are either horizontal, vertical or normal to $(1,-1)$. Hence it follows directly from this definition that the demand correspondence in Figure~\ref{fig:polyhedral-complex} enjoys the strong substitutes property.

\begin{figure}
    \centering
    \subfigure[]{\includegraphics[scale=1]{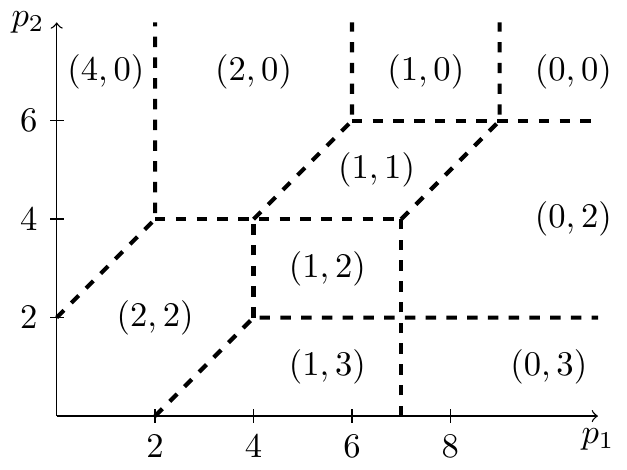}}
    \subfigure[]{\includegraphics[scale=1]{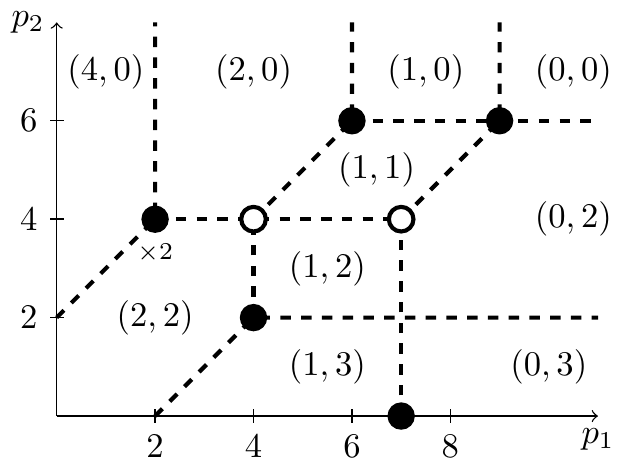}}
    \caption{Left: An illustration of a strong-substitutes demand correspondence with two goods, partitioning price space into piecewise-linear convex regions. Each region is labelled with the bundle demanded at prices in the region. The dashed lines comprise the Locus of Indifference Prices (LIP). Right: Six positive (solid) and two negative (hollow) bids of unit weight are required to express this demand. (Note the two positive unit-weight bids at $(2,4)$, which can alternatively be normalised to a single bid of weight $2$.)}
    \label{fig:polyhedral-complex}
\end{figure}

\subsubsection{Discrete Convex Analysis}
A `price-free' characterisation of the strong substitutes property using the language of discrete convex optimisation is given by Shioura and Tamura \cite{ShiouraTamura2015}. A function $f : \Zn \to \uR$ is called \emph{$M^\natural$-concave} if it satisfies the following \emph{exchange property}. For any $\xb, \yb \in \dom f$ and $i \in \supp^+(\xb - \yb)$, there exists ${j \in \supp^-(\xb - \yb) \cup \{ 0 \}}$ such that
\begin{equation}\label{eq:exchange-property}
    f(\xb) + f(\yb) \leq f(\xb - \eb^i + \eb^j) + f(\yb + \eb^i - \eb^j).
\end{equation}
Here we define $\eb^0 = 0$, and the positive and negative support of a vector $\vec{z} \in \Zn$ as $\supp^+(\vec{z}) = \{ i \in [n] \mid z_i > 0 \}$ and $\supp^-(\vec{z}) = \{ i \in [n] \mid z_i < 0 \}$.

\begin{theorem}[{\cite[Theorem 4.1]{ShiouraTamura2015}}]
\label{thm:SS-is-M-natural-concave}
    A quasi-linear demand correspondence $D_u$ is strong substitutes if and only if its valuation $u$ is $M^\natural$-concave.
\end{theorem}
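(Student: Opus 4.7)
The plan is to prove the two directions of Theorem~\ref{thm:SS-is-M-natural-concave} separately. The forward direction ($M^\natural$-concave $\Rightarrow$ SS) is by contradiction, combining the exchange inequality~(\ref{eq:exchange-property}) with strict optimality inequalities implied by unique demand. The reverse direction proceeds by contrapositive: from any hypothetical failure of the exchange property I construct prices at which SS fails.

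For the forward direction, take $\pb' \geq \pb$ with $D_u(\pb) = \{\xb\}$ and $D_u(\pb') = \{\xb'\}$, and choose this pair to minimise $\|\xb - \xb'\|_1$ among all demanded pairs. Suppose GS fails, so some coordinate $k$ with $p_k = p'_k$ satisfies $x'_k < x_k$. Then $k \in \supp^+(\xb - \xb')$, and~(\ref{eq:exchange-property}) produces $j \in \supp^-(\xb - \xb') \cup \{0\}$ with
\[
u(\xb) + u(\xb') \leq u(\xb - \eb^k + \eb^j) + u(\xb' + \eb^k - \eb^j).
\]
Strict optimality gives $u(\xb) - u(\xb - \eb^k + \eb^j) > p_k - p_j$ and $u(\xb') - u(\xb' + \eb^k - \eb^j) > p'_j - p'_k$; summing these with the exchange inequality and using $p_k = p'_k$ forces $p'_j < p_j$, contradicting $\pb' \geq \pb$ when $j \neq 0$ and giving $0 < 0$ when $j = 0$. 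The LAD case is analogous, applied to $(\xb', \xb)$ at a coordinate $k$ with $x'_k > x_k$: the $j=0$ option in~(\ref{eq:exchange-property}) corresponds to discarding an item and is precisely the mechanism controlling aggregate demand; the minimality of $\|\xb-\xb'\|_1$ is used to rule out the case $j\neq 0$, where the exchange would otherwise produce a strictly closer demanded pair.

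For the reverse direction, suppose $u$ fails to be $M^\natural$-concave, so there exist $\xb, \yb \in \dom u$ and $i \in \supp^+(\xb - \yb)$ with
\[
u(\xb) + u(\yb) > u(\xb - \eb^i + \eb^j) + u(\yb + \eb^i - \eb^j)
\]
for every $j \in \supp^-(\xb - \yb) \cup \{0\}$. I would construct prices $\pb, \pb'$ that agree on every coordinate except $i$, with $p'_i > p_i$, such that $\xb$ is uniquely demanded at $\pb$ and $\yb$ is uniquely demanded at $\pb'$: start on the hyperplane $\{\pb : u(\xb) - \pb \cdot \xb = u(\yb) - \pb \cdot \yb\}$ and perturb $p_i$ downwards and then upwards by small amounts. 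The SS hypothesis then imposes $y_k \geq x_k$ for $k \neq i$ (GS) and $\|\yb\|_1 \leq \|\xb\|_1$ (LAD), which restricts the shape of $\xb - \yb$; together with the strict optimality inequalities for the candidate bundles $\xb - \eb^i + \eb^j$ and $\yb + \eb^i - \eb^j$ at $\pb$ and $\pb'$ respectively, this contradicts the hypothesised strict exchange failure for at least one admissible $j$.

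The main obstacle is the reverse direction, specifically the step of ensuring that the constructed $\pb$ and $\pb'$ really do induce $\xb$ and $\yb$ as unique demanded bundles, rather than some bundle farther from $\xb$ in $L_1$ that happens to beat it at these prices. I would address this by an induction on $\|\xb - \yb\|_1$: if a minimal counterexample to the exchange property were not separated by a single facet of the LIP (see Section~\ref{sec:lip}), the facet-normal characterisation of SS would yield an even closer pair also violating the exchange, contradicting minimality. Once $\xb$ and $\yb$ lie on adjacent demand cells, the local picture is determined by a facet normal of the form $\eb^i$ or $\eb^i - \eb^j$, and the required exchange inequality is recovered directly from the geometry.
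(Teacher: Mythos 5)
The paper does not actually prove this statement — it cites it from Shioura and Tamura (Theorem 4.1 there), so there is no in-paper proof to compare against. Your proposal therefore has to stand on its own, and it has gaps in both directions.

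Forward direction: your GS argument is correct — applying the exchange property with $i=k$ together with the two strict-optimality inequalities at $\pb$ and $\pb'$ gives $p_j > p'_j$ (for $j\neq 0$) or $0>0$ (for $j=0$), a contradiction either way; note the $\|\xb-\xb'\|_1$-minimality you declare plays no role here. The LAD half, however, does not go through as sketched. Running the same computation on $(\xb',\xb)$ at a coordinate $k$ with $x'_k>x_k$ only yields $p'_j - p_j > p'_k - p_k \geq 0$ for some $j$ with $x'_j<x_j$; this is entirely consistent with $\pb'\geq\pb$, so for $j\neq 0$ there is no contradiction. Your claim that minimality rules out $j\neq 0$ because the exchange "would produce a strictly closer demanded pair" is unsupported: the exchange inequality concerns the bundles $\xb'-\eb^k+\eb^j$ and $\xb+\eb^k-\eb^j$, which are by assumption \emph{not} demanded at $\pb'$ or $\pb$ (demand there is unique), so no closer demanded pair arises. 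LAD genuinely needs a separate structural idea — e.g.\ passing to the $M$-concave lift $\hat{u}$ of eq.~\eqref{eq:M-concave}, whose effective domain lies in a hyperplane where $x_0+\sum_i x_i$ is constant, and applying GS to the reject coordinate. For the reverse direction, you correctly flag the hard step (realising $\xb,\yb$ as unique demands at comparable prices), but your proposed repair invokes the Baldwin--Klemperer facet-normal characterisation of SS LIPs, which is itself a theorem of comparable depth to the one you are proving and is not established anywhere in your argument; the accompanying induction on $\|\xb-\yb\|_1$ is too loosely stated to check. As written, the proposal is a plausible outline but not a proof.
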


$M^\natural$-concave functions are closely related to $M$-concave functions, which satisfy the exchange property \eqref{eq:exchange-property} for some non-zero $j \in \supp^-(\xb - \yb)$. Every $n$-dimensional $M^\natural$-concave function can be obtained as the projection of an $n+1$-dimensional $M$-concave function onto an $n$-dimensional hyperplane. Conversely, we can obtain the corresponding $M$-concave function $\hat{f}$ of an $M^\natural$-concave function $f$ as
\begin{equation}\label{eq:M-concave}
    \hat{f}(x_0, \xb) =
    \begin{cases}
    f(\xb)  & \text{if } x_0 = - \sum_{i \in [n]} x_i,\\
    -\infty & \text{otherwise},
    \end{cases}
\end{equation}
where $(x_0, \xb) \in \Z^{n+1}$ is an $n+1$-dimensional vector. For details on $M^\natural$- and $M$-concave functions, we refer to Murota \cite{Murota2013}.

\subsection{The Bidding Language}\label{sec:dot-bids}
The Product-Mix Auction introduces a novel bidding language that allows us to express every strong substitutes demand as a finite list $\bids$ of positive and negative bids. A \emph{bid} consists of an $n$-dimensional integral vector $\bid \in \Z^n$ and a weight $w(\bid) \in \mathbb{Z}$. When working with the notional reject good $0$ introduced above, we identify a bid vector $\bid$ with the $n+1$-dimensional vector obtained by adding a $0$-th entry of value 0. We note that any bid with a weight of $w(\bid) \in \Z$ is equivalent to $w(\bid)$ unit bids with the same vector and sign. This allows us to normalise bid lists to their most succinct form, where no two bids share the same vector. In this paper, we wish to learn the unique normalised bid list that represents the bidder's demand correspondence.


For each bid $\bid$, we can understand $b_i$ as the amount that $\bid$ is willing to spend on good~$i$. Suppose the auctioneer sets prices $\pb$. The bid is \emph{rejected} at~$\pb$ if $b_i < p_i$ for all goods~$i$. Otherwise, the bid \emph{demands a good} $i \in [n]$ that maximises $b_i - p_i$ at price $\pb$. The notational `reject' good $0$ simplifies notation: recalling that we defined $b_0 = 0 = p_0$, we say that~$\bid$ demands good $i \in [n]_0$ if $i \in \argmax_{i \in [n]_0} (b_i - p_i)$, and receiving the `reject' good is equivalent to the bid being rejected. If the set of demanded goods $\argmax_{i \in [n]_0} {(b_i - p_i)}$ at~$\pb$ contains more than one good, we say that $\bid$ is \emph{indifferent} between these goods at~$\pb$. (In particular, a bid may be indifferent between demanding goods and being rejected when ${\max_{i \in [n]_0} (b_i - p_i) = 0}$). A price~$\pb$ is \emph{marginal} if there are bids indifferent between goods at~$\pb$, and non-marginal otherwise.

We can now introduce the \emph{demand correspondence} $D_\bids(\pb)$ for a bid list $\bids$ as follows. If $\pb$ is non-marginal, the unique bundle demanded at $\pb$ is obtained by adding $w(\bid)$ items of $i(\bid)$ to the bundle for each $\bid \in \bids$, where $i(\bid)$ is the unique good that $\bid$ demands at $\pb$. If $\pb$ is marginal, $D_\bids(\pb)$ consists of the discrete convex hull of the bundles demanded at non-marginal prices arbitrarily close to~$\pb$, where the discrete convex hull of a set of bundles $X$ is defined as $\conv(X) \cap \Z$. In general, this implies that we cannot independently allocate to each bid one of the goods it demand, as this may result in bundles that are not in $D_\bids(\pb)$.

Baldwin and Klemperer \cite{Bal-Kle} show that any strong substitutes demand correspondence $D_v$ can be represented as a finite list $\bids$ of positive and negative bids such that $D_v(\pb) = D_\bids (\pb)$ for all prices $\pb$, and this representation is essentially unique (if we restrict ourselves to normalised bid lists as described above). The bids in Figure~\ref{fig:polyhedral-complex} (right) represent the strong substitutes demand shown in Figure~\ref{fig:polyhedral-complex} (left). Conversely, however, not all lists of positive and negative bids induce a strong substitutes demand correspondence; we call a bid list \emph{valid} if it does. Theorem~\ref{thm:valid-bid-list}, taken from \cite{BGKL19}, gives a criterion that allows us to check validity. It is known that the problem of deciding the validity of a bid list is coNP-complete~\cite{BGKL19}.

\begin{theorem}
\label{thm:valid-bid-list}
    A bid list is valid if and only if the weights of the bids indifferent between $i$ and $i'$ at $\pb$ sum to a non-negative number, for all $\pb \in \Rn$ and $i,i' \in [n]_0$.
\end{theorem}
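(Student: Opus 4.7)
The plan is to prove both directions by analyzing how $D_\bids(\pb)$ changes as $\pb$ crosses a single indifference hyperplane of the form $\{\qb : q_i - q_{i'} = b_i - b_{i'}\}$, using the convention $q_0 \equiv b_0 \equiv 0$. For the necessity direction (contrapositive), suppose there exist $\pb$ and $i, i' \in [n]_0$ for which the set $S$ of bids indifferent between $i$ and $i'$ at $\pb$ has weights summing to $s < 0$; WLOG $i \in [n]$. By a small perturbation of $\pb$ along this hyperplane we may assume $S$ is the only set of indifferent bids at $\pb$. Choosing $\varepsilon > 0$ small enough that $\pb^\pm := \pb \pm \varepsilon \eb^i$ are non-marginal and no bid outside $S$ changes its demanded good between them, every bid in $S$ uniquely demands $i$ at $\pb^-$ and $i'$ at $\pb^+$, giving $D_\bids(\pb^+) - D_\bids(\pb^-) = s(\eb^{i'} - \eb^i)$ (with $\eb^0 := 0$). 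If $i' \in [n]$, then $p^+_{i'} = p^-_{i'}$ while the $i'$-coordinate of demand strictly decreases, contradicting GS; if $i' = 0$, then $\|D_\bids(\pb^+)\|_1 - \|D_\bids(\pb^-)\|_1 = -s > 0$, contradicting LAD. In either case $\bids$ is not valid.

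For sufficiency, assume all weight sums are non-negative, and let $\pb \leq \pb'$ be non-marginal prices with unique demands $\xb, \xb'$. Connect them by $n$ axis-aligned segments, each incrementing one coordinate $k$; after a small perturbation we may assume each segment endpoint is non-marginal and each crossing of the marginal locus in the interior of a segment is isolated. At each such crossing along the $k$-th segment, some set of bids simultaneously becomes indifferent between good $k$ and exactly one other good $i'$, and by hypothesis their combined weight is some $s \geq 0$; the crossing shifts demand by $s(\eb^{i'} - \eb^k)$. Along the $k$-th segment therefore, demand for $k$ is non-increasing, demand for every other good in $[n]$ is non-decreasing, and $\|D_\bids(\cdot)\|_1$ is non-increasing. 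Concatenating across the $n$ segments yields the GS and LAD conditions of Definition~\ref{def:strong-substitutes}, so $D_\bids$ is strong substitutes and $\bids$ is valid.

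The principal technical obstacle is the genericity step in both directions: one must argue that the chosen sample points and connecting paths can be placed in \emph{general position} with respect to the LIP of $\bids$, so that each relevant facet crossing involves only one pair of goods and a single indifferent subset of bids. This rests on the marginal locus being a finite union of affine hyperplanes, so that its codimension-$2$ skeleton has Lebesgue measure zero and can always be sidestepped by arbitrarily small perturbations; nevertheless, the bookkeeping for simultaneous crossings and for the boundary cases $i = 0$ or $i' = 0$ (where the reject good is involved and LAD rather than GS is the binding constraint) requires careful handling.
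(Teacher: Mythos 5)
The paper does not prove this theorem itself — it is stated as a citation to \cite{BGKL19} — so there is no internal proof to compare yours against. Evaluating your argument on its own terms, the high-level plan (relate negative weight sums to a localised GS/LAD violation, and nonnegative weight sums to monotone demand changes along axis-aligned paths) is a reasonable route and your bookkeeping of which facets can be crossed on each segment is correct. However, there are two substantive gaps.

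First, in the necessity direction the genericity step you flag is not merely bookkeeping: it is unclear that it can always be carried out. You assert that a small perturbation of $\pb$ within the hyperplane $\{\qb : q_i - q_{i'} = b_i - b_{i'}\}$ lets you assume $S$ is the only indifferent set, but the perturbation also changes $S$ itself — bids join or leave $S$ depending on the argmax condition for the other goods — and there is no argument that the new weight sum remains negative. When $\pb$ lies in a low-dimensional stratum of the LIP (e.g.\ a vertex where bids are indifferent among three or more goods), the relevant facet-interior point with a negative weight sum might lie in a different direction from the one your construction suggests, and it must be shown that some such point exists; this is essentially the content of the finite reduction (Lemma B.10 of \cite{BGKL19}) that the present paper invokes elsewhere, and it deserves a real argument rather than a genericity appeal.

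Second, and more fundamentally, the sufficiency direction establishes that $D_\bids$ satisfies the GS and LAD inequalities of Definition~\ref{def:strong-substitutes}, but that is not the same as showing $\bids$ is \emph{valid}. Validity, as the paper defines it, means $D_\bids = D_v$ for some quasi-linear valuation $v$ (which is then $M^\natural$-concave by Theorem~\ref{thm:SS-is-M-natural-concave}). GS and LAD are first-order monotonicity conditions along axis directions on the piecewise-constant map $D_\bids$; they do not by themselves certify that $D_\bids$ is rationalisable, i.e.\ that the indirect-utility function $U(\pb) = \sum_{\bid \in \bids} w(\bid)\max_{i\in[n]_0}(b_i - p_i)$ is convex so that it has a concave conjugate $v$ with $D_v = D_\bids$. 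Nonnegative facet weight sums do give convexity of $U$ across each facet and hence the desired rationalisability, but that is precisely the step missing from your proof; without it, you have shown a necessary condition for validity rather than validity itself. You also implicitly use nonnegativity of $D_\bids$ in the $L_1$-norm computation for LAD; this does follow from your segment-wise monotonicity argument starting from a price where demand is $\bm{0}$, but it should be made explicit.
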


A special subclass of the strong substitutes demand type is the family of demand correspondences that can be expressed using only positive bids. This family is of particular practical interest, as the Bank of England currently runs the Product-Mix Auction with positive bids only. Moreover, as a single positive bid corresponds to the demand of a unit-demand consumer, learning a list of positive bids is equivalent to learning the demands of a collection of unit-demand consumers; the latter have been studied in a variety of settings such as profit-maximising envy-free pricing \cite{Guruswami2005}. 
Note that any list of positive bids is valid, as it trivially satisfies Theorem~\ref{thm:valid-bid-list}.

\subsection{The Geometry of Bids}
In the previous section, we explained the algebraic relationship between a bid list and its resulting demand correspondence. Here we highlight the geometry of such a demand correspondence, as this forms the basis of our algorithms.

Fix a bid $\bid \in \Zn$ and an item $i \in [n]_0$. We let $R_i$ denote the set of prices at which $\vec{b}$ demands good~$i$. Each $R_i$ is an unbounded convex polytope in $\Rn$, which can be expressed succinctly as $R_i = \bid + H_i$, where $H_i$ is the conic hull of the vectors $\eb^j$, $j \in [n]_0 \setminus \{i\}$, if we define $\eb^0 = -\eb^{[n]}$ (see Figure~\ref{fig:exa}). Note that each~$R_i$ is of full affine dimension and the $R_i$ together cover the entirety of $\mathbb{R}^n$. In line with the previous section, $\bid$ is indifferent between two goods $i, j \in [n]_0$ at $\pb$ if and only if $\pb \in R_i \cap R_j$. Moreover, if a price lies in the interior of any given $R_i$, good~$i$ is the unique good demanded by $\bid$. For a given list of bids $\bids$, we recall that a price $\vec{p}$ is non-marginal if each bid in $\bids$ demands a unique item at $\vec{p}$. If we define $R^j_i \coloneqq \bid^j + H_i$ for each $\bid^j \in \bids$, then it is straightforward to see that $\vec{p}$ is non-marginal if and only if it does not lie on the boundary of any $R^j_i$. This allows us to describe the geometry of the demand correspondence $D_\bids$. Suppose that $\pb$ is non-marginal, and that for each $\bid \in \bids$, $i(\bid)$ is the unique good demanded by $\bid$ at $\pb$. We recall that the unique bundle demanded at~$\vec{p}$ is obtained by adding $w(\vec{b})$ items of good $i(\vec{b})$ for each $\vec{b} \in \bids$. We can express this as
\[
D_{\bids}(\vec{p}) = \sum_{\substack{(i,j): \\ \vec{p} \in R^j_i}} w(\vec{b}^j) \vec{e}^i.
\]
This is illustrated in Figure~\ref{fig:exa}. We extend the definition of~$D_\bids$ to marginal prices as above: if $\vec{p}$ is marginal, $D_{\bids}(\vec{p})$ is the discrete convex hull of the bundles demanded at non-marginal prices arbitrarily close to $\vec{p}$.

\begin{figure}[tb!]
    \centering
    \subfigure[]{
        \includegraphics[scale=1]{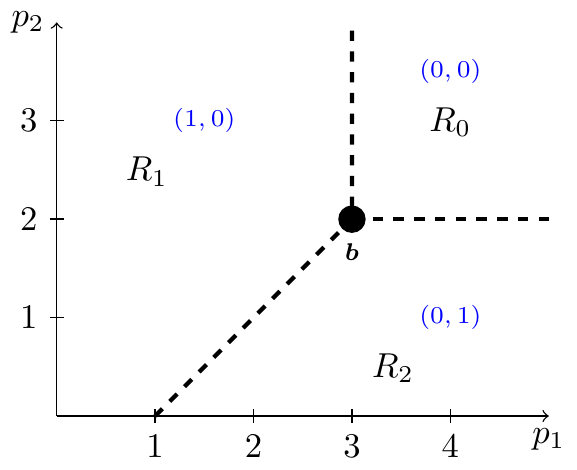}
    }
    \subfigure[]{
        \includegraphics[scale=1]{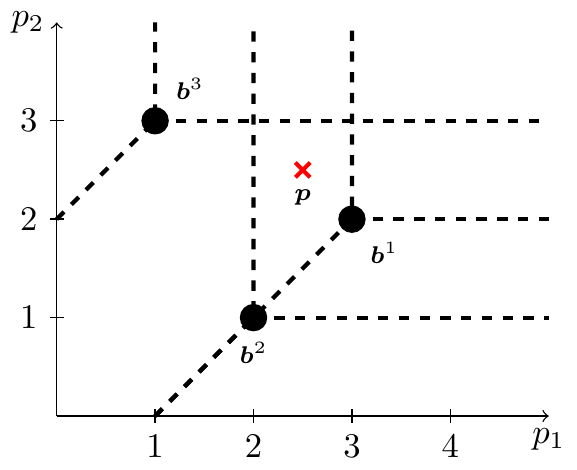}
    }
    \caption{Left: The demand regions $R_i$ of a single positive bid $\vec{b}$ in the case of two goods $1,2$ and the reject good $0$.
    Right: a three-element bid list $\bids = \{\bid^1,\bid^2,\bid^3 \}$.
    The price vector $\vec{p}$ (marked by a red cross) lies in $R_1^1$, $R_0^2$, and $R_2^3$.
    Hence, at prices $\pb$, the bid $\vec{b}^1$ demands good $1$, $\vec{b}^2$ is rejected, and $\vec{b}^3$ demands good 2.
    Putting this together, $D_{\bids}(\vec{p}) = (1,1)$. }
    \label{fig:exa}
\end{figure}

\subsubsection{Valid Bid Lists}
As mentioned above, not all demand correspondences $D_\bids$ arising from bid lists encode a quasi-linear, strong substitutes demand correspondence $D_v$ as defined in Section~\ref{sec:SS-demand}. To obtain SS demand, the bid list must satisfy the property given in Theorem \ref{thm:valid-bid-list}. Here we explore what validity means geometrically. Suppose that $\bids$ is a bid list, and for any given price $\vec{p}$, define the $(i,\ell)$-\emph{support} $\supp_{i,\ell} (\vec{p})$ of $\vec{p}$ to be the set of all bids $\vec{b}^j \in \bids$ such that $\vec{p} \in R^j_i \cap R^j_\ell$.
In other words, all bids that are indifferent between goods $i$ and $\ell$ at price $\vec{p}$.
This allows us to give a condition for non-validity of our bid list $\bids$. In particular, $\bids$ is valid if and only if $\sum_{\vec{b} \in \supp_{i,\ell} (\vec{p})} w(\vec{b}) \geq 0$ for all prices $\pb$ and pairs of goods $i, \ell \in [n]_0$.

\subsection{The Computational Challenges}
Consider a bidder who has an (unknown) strong substitutes demand correspondence $D_v$ on~$n$ goods. We study the problem of learning the unique list $\bids$ of positive and negative bids of unit weight that represent a bidder's demand correspondence, i.e.~such that $D_v = D_\bids$. We consider algorithms that learn $\bids$ by querying the demand correspondence $D_v$ at different price vectors. More specifically, our algorithms have access to an \emph{adversarial demand oracle}~$\mathcal{Q}_\bids$; given any price vector $\pb$, $\mathcal{Q}_\bids(\pb)$ returns a bundle from $D_\bids(\pb)$. A bidder may demand multiple bundles at some price (i.e.\ when $|D_\bids(\pb)| > 1$), in which case the adversarial oracle simply returns a single demanded bundle at that price, and we have no control over which such bundle is returned. Another related setting we address in Section~\ref{sec:simulating} is the complexity of learning $\bids$ given access to a \emph{valuation oracle}, i.e.~given a bundle $\xb$, the bidder reports their valuation $v(\xb)$ for this bundle.

Let $B \coloneqq |\bids|$ be the number of bids we wish to learn. Moreover, let $M \coloneqq \max_{\bid \in \bids} \|\bid \|_\infty$ be the \emph{magnitude} of the bids w.r.t to the $L_\infty$ norm and $W \coloneqq \max_{\bid \in \bids} w(\bid)$ be the maximum bid weight. Note that for any unknown bid list $\bids$, we can determine the value of $M$ with $O(\log M)$ demand queries, as $M$ corresponds to the smallest value $m$ such that the bidder demands the empty bundle at price vector $\pb = m \eb^{[n]}$, which can be found using binary search. We are interested in the query complexity of learning $\bids$, measured in terms of $n$, $B$, $\log M$ and $\log W$. Note that $n B\log M + B \log W$ bits are required to store the bid list $\bids$, under the natural assumption that bid vectors and weights are encoded in binary.

\section{Simulating \texorpdfstring{$\qoracle$}{a Demand Oracle} with a Valuation Oracle}
\label{sec:simulating}

In this section we show that demand oracles are not unreasonably powerful, in the sense that we can use a valuation oracle to simulate a demand oracle with polynomial overhead. Consider the setting where we are given query access to a bidder's valuation function $v$. We show that a single query to $\qoracle$ can be simulated with a polynomial number of queries to a valuation oracle. This result utilises the equivalence of the strong substitutes property and $M^\natural$-convexity from the discrete convex analysis literature.

Recall that the utility of bundle $\xb$ at prices $\pb$ is given by $u(\xb;\pb) = v(\xb) - \pb \cdot \xb$. We define $u_{\pb} \coloneqq u(\cdot;\pb)$ for convenience. In order to simulate a demand oracle on input $\pb$, we wish to compute a bundle $\xb \in D_v$ that maximises $u_{\pb}(\xb)$. Note that we can compute $u_{\pb}(\xb)$ for any bundle $\xb$ using a single query to the valuation oracle. In order to compute a maximiser of $u_{\pb}(\cdot)$, we draw from the discrete convex analysis literature. Firstly, we see that $u_{\pb}$ is $M^\natural$-concave. Indeed, it is well-known that strong substitutes valuations are $M^\natural$-concave \cite{ShiouraTamura2015} and subtracting a linear term preserves this property. Secondly, let $\hat{u}$ be the corresponding $M$-concave function to $u_{\pb}$ as defined in \eqref{eq:M-concave}. We see that maximising $u_{\pb}$ is equivalent to maximising~$\hat{u}$. Moreover, we can compute $\hat{u}(x_0, \xb)$ using at most one query to the valuation oracle. Thirdly, note that we have $\|x\|_1 \leq BW$ for any bundle $\xb$ that the bidder demands, as every bid $\bid \in \bids$ contributes at most $W$ items to $\xb$.

Murota \cite[Chapter 10]{Murota2013} provides multiple algorithms for maximising $M$-concave functions~$f$ with bounded effective domains $\dom f$. The simplest such algorithm, a straightforward steepest descent method, finds a maximiser with $O(n^2 L)$ queries, where $L \coloneqq \max \{ \|x-y \|_1 \mid x,y \in \dom f \}$. In our setting, we have $L = BW$, yielding a query complexity of $O(n^2BW)$. This query complexity can be improved to $O(n^3 \log(BW/n))$ by applying the more involved algorithms for maximising $M$-concave functions given in \cite{Shioura2004} and \cite{Tamura2004}. We note that this query complexity is polynomial in $n, B$ and $\log W$.

\section{Learning Positive-Weighted Bids}\label{sec:positivebids}
In this section we assume that the bidder's demand correspondence can be expressed by a list of positive bids. This is the case, for instance, when we wish to learn the individual demands of a collection of unit-demand consumers.\footnote{In this setting we would have access only to an oracle that returns an aggregately demanded bundle at given prices.} Our algorithm learns a list of $B$ positive bids using $O(n B \log M)$ demand queries. This is close to our lower bound of $\Omega(B \log M)$ given in Theorem~\ref{thm:simple-lower-bound} below.

We proceed by repeatedly finding a bid and `removing' it, thereby reducing the size of the remaining demand correspondence until all bids have been found. Let $\learnt$ denote the subset of bids from $\bids$ that have already been learnt, and let $\bids' \coloneqq \bids - \learnt$ be the list of remaining bids. We can simulate a demand oracle $\mathcal{Q}_{\bids'}(\pb)$ for the demand correspondence associated with $\bids'$ as follows. At price vector $\pb$, first determine a bundle $\xb$ demanded by all bids in~$\bids$ with a single query $\qoracle(\pb)$, and then subtract from $\xb$ a~bundle~$\yb$ demanded at $\pb$ by the bids in $\learnt$.\footnote{Note that $\bids'$ is valid, as lists of positive bids are always valid. If $\bids$ consisted of positive and negative bids, removing a single positive bid might result in a bid list that is no longer valid. In this case, the algorithm described in this section may fail and return points not corresponding to bid locations.}

In this way, the problem of learning a list of positive bids reduces to repeatedly identifying a single bid. In the next section, we describe a subroutine that learns the location of a single bid in $\bids'$ using $O(n \log M)$ queries. As this subroutine is called~$B$ times, this yields an overall query complexity of $O(nB \log M)$ for learning all bids in $\bids$. Recall that we can compute $M$ with $O(\log M)$ queries.

\subsection{Finding a Single Positive Bid}
\label{sec:finding-positive-bid}
We present an algorithm that performs binary searches using \emph{delta queries} to successively learn the coordinates $x_1, \ldots, x_n$ of a bid's location $\xb$ together with its weight. We begin by defining delta queries and establishing some fundamental facts about the results returned by these queries.

\begin{definition}
\label{def:delta-query}
    A \emph{delta query} $\Delta(\qb)$ at $\qb \in \Zn$ consists of two queries $\qb^+$ and $\qb^-$ defined by $\qb^+ = \qb' + \frac{1}{2n}\eb^1$ and $\qb^- = \qb' - \frac{1}{2n}\eb^1$, where we let
    \[
		\qb' \coloneqq \qb + \sum_{i \in \{ 2, \ldots, n \} } \frac{1}{2(n-i+1)} \eb^i.
	\]
    The return value of the delta query is defined as $\Delta(\qb) = x^-_1 - x^+_1$,
    where $\xb^+$ and $\xb^-$ are the bundles of goods uniquely demanded at $\qb^+$ and $\qb^-$.
\end{definition}

Note that $q^+_1 = q_1 + \frac{1}{2n}$ and $q^-_1 = q_1 - \frac{1}{2n}$, and the two query points $\qb^+$ and $\qb^-$ agree on all other coordinates $i \geq 2$. Secondly, $\qb^\pm$ is non-marginal by construction, so any bid $\bid \in \Zn$ uniquely demands some good $i$ at $\qb^\pm$. The intuition behind delta queries is as follows. Consider the hyperplane normal to $\eb^1$ that contains $\qb$. In a first step, we carefully perturb $\qb$ such that the resulting point $\qb'$ remains on the hyperplane and no bid is indifferent between any two goods in $\{2, \ldots, n\}$. The points $\qb^-$ and $\qb^+$ are then obtained by perturbing $\qb'$ in directions $\pm \eb^1$ such that the prices become non-marginal.

Lemma~\ref{lemma:delta-query} makes the observation that bids $\bid$ satisfying $b_1 = q_1$ and $\bid \leq \qb$ demand good~$1$ at $\qb^-$ and are rejected at $\qb^+$, while all other bids demand the same good at both prices $\qb^\pm$. Hence demand changes only in terms of good $1$, and $\Delta(\qb)$ captures the magnitude of this change.\footnote{Lemma~\ref{lemma:delta-query} still holds if the bid list contains positive \textit{and} negative bids. Indeed, a generalised version of delta queries is used in our algorithm for learning positive and negative bids in Section~\ref{sec:generalbids}.}. In our current setting where all bids have positive weights, Corollary~\ref{corollary:delta-counting} notes that this is equivalent to summing the weights of the bids $\bid$ that satisfy $b_1 = q_1$ and $\bid \leq \qb$. Our algorithm exploits this fact in order to learn the coordinates of a bid location as well as the bid weight.

\begin{lemma}\label{lemma:delta-query}
    Suppose we place a delta query at $\qb \in \Zn$. Then any bid $\bid \in \Zn$ demands different goods at $\qb^-$ and $\qb^+$ if and only if $b_1 = q_1$ and $\bid \leq \qb'$. Moreover, any such bid demands an item of good 1 at $\qb^-$ and an item of the reject good 0 at $\qb^+$.
\end{lemma}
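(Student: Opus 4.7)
The plan is to directly analyse the utilities that a generic bid $\bid \in \Zn$ derives from each good at $\qb^+$ and $\qb^-$. Since these two prices agree on every coordinate except the first (where they differ by $1/n$), the utility $b_j - q'_j$ from each good $j \geq 2$ is identical at both prices, and the utility from the reject good $0$ is $0$ at both. Only good $1$'s utility changes: it equals $b_1 - q_1 - \frac{1}{2n}$ at $\qb^+$ and $b_1 - q_1 + \frac{1}{2n}$ at $\qb^-$, an increase of $1/n$. Hence, if $\bid$ demands different goods at $\qb^-$ and $\qb^+$, the switch must be from some good $j \neq 1$ at $\qb^+$ to good $1$ at $\qb^-$.

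Let $M' \coloneqq \max_{j \in [n]_0 \setminus \{1\}} (b_j - q'_j)$ denote the best alternative utility, using $b_0 - q'_0 = 0$. Because the perturbations $\frac{1}{2(n-j+1)}$ in coordinates $j \geq 2$ are pairwise distinct, lie strictly between $0$ and $\frac{1}{2}$, and differ from $\frac{1}{2n}$, the prices $\qb^{\pm}$ are non-marginal and the bid uniquely demands a single good at each. The bid therefore demands different goods at $\qb^{\pm}$ iff $M' \in \bigl(b_1 - q_1 - \frac{1}{2n},\; b_1 - q_1 + \frac{1}{2n}\bigr)$, which I will handle by cases on whether $M'$ is attained at good $0$ or at some good $j \geq 2$.

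If $M'=0$ (attained at good $0$), then $b_1 - q_1$ is an integer in the open interval $(-\frac{1}{2n}, \frac{1}{2n})$, forcing $b_1 = q_1$; and $M'=0$ additionally requires $b_j - q'_j < 0$ for all $j \geq 2$, which by integrality of $b_j$ and $q_j$ is equivalent to $b_j \leq q_j$. Conversely, if $M' = b_j - q_j - \frac{1}{2(n-j+1)}$ for some $j \geq 2$, then $M' - (b_1 - q_1)$ equals the integer $(b_j - q_j) - (b_1 - q_1)$ minus $\frac{1}{2(n-j+1)}$, and must lie in $(-\frac{1}{2n}, \frac{1}{2n})$. Since $\frac{1}{2(n-j+1)} > \frac{1}{2n}$ for $j \geq 2$, a short arithmetic check rules this out; I expect this fractional-part argument to be the main technical step, as it relies on the perturbations being tuned precisely so that no alternative good can be the relevant switcher.

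Combining the two cases, the bid demands different goods exactly when $b_1 = q_1$ and $b_j \leq q_j$ for all $j \geq 2$. Since $q'_1 = q_1$ and $q'_j > q_j$ with $b_j$ integer, this is equivalent to the stated conditions $b_1 = q_1$ and $\bid \leq \qb'$. Finally, for any such bid, at $\qb^-$ good $1$ has utility $\frac{1}{2n} > 0$ and $b_j - q'_j < 0$ for $j \geq 2$, so good $1$ is uniquely demanded; at $\qb^+$ good $1$'s utility is $-\frac{1}{2n} < 0$ while the reject good $0$ has utility $0$, so good $0$ is uniquely demanded, establishing the second assertion of the lemma.
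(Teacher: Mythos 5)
Your proof is correct and follows essentially the same route as the paper's: both exploit the integrality of $\bid$ and $\qb$ together with the carefully tuned fractional perturbations $\frac{1}{2(n-j+1)}$ to show that the only possible transition is from the reject good at $\qb^+$ to good $1$ at $\qb^-$, concluding $b_1=q_1$ and $\bid\leq\qb'$. Your framing via the best alternative utility $M'$ is a mild reorganisation of the paper's case split on the pair of goods $(i,j)$ demanded at $\qb^-$ and $\qb^+$, but the key arithmetic step (ruling out a switcher $j\geq 2$ because $\frac{1}{2(n-j+1)}>\frac{1}{2n}$) is identical.
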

\begin{proof}
    Suppose $\bid$ is a bid with $b_1 = q_1$ and $\bid \leq \qb'$. As $\bid$ is integral, we have ${b_i < q^+_i = q^-_i}$ for all goods $i \geq 2$, which implies that $\bid$ can only demand goods $0$ or $1$ at $\qb^-$ and $\qb^+$. At~$\qb^-$, bid $\bid$ uniquely demands good 1, as we have $b_1 - q^-_1 = \frac{1}{2n} > 0$. Similarly, we see that $\bid$ uniquely demands the reject good~$0$ at $\qb^+$, as $b_1 - q^+_1 = -\frac{1}{2n} < 0$.

    Conversely, suppose $\bid$ demands distinct goods $i$ and $j$ at $\qb^-$ and $\qb^+$, respectively. This implies the two fundamental inequalities (a) $b_i - q_i^- > b_j - q_j^-$ and (b) $b_j - q_j^+ > b_i - q_i^+$. We first show that we must have $i=1$ and $j=0$ by excluding all other possibilities. (Recall that prices $p_0$ and bid values $b_0$ are 0 for the reject good 0, by definition.) Suppose that $i=0$ or $i \geq 2$. Then $q_i^- = q_i^+$ by construction of $\qb$, so (a) and (b) imply $b_j - q_j^- < b_j - q_j^+$ in contradiction to $q_j^- \leq q_j^+$, which holds by construction of $\qb$.
    Next suppose $i=1$ and $j \geq 2$. Then $q_j^- = q_j^+ = q_j - \frac{1}{2(n-j+1)}$. The inequalities (a) and (b) imply $b_1 - q_1 - \frac{1}{2n} < b_j - q_j^- < b_1 - q_1 + \frac{1}{2n}$. As $\bid$ and $\qb$ are integral, it follows that $b_j - q_j^-$ must lie within $\frac{1}{2n}$ of an integral point, in contradiction to $q_j^- = q_j - \frac{1}{2(n-j+1)}$.

    Finally, we show that a bid $\bid$ that demands good $i=1$ at $q^-$ and $j=0$ at $q^+$ satisfies $b_1 = q_1$ and $\bid \leq \qb'$, again using the integrality of $\bid$ and $\qb$. Firstly, (a) and (b) imply $q_1 - \frac{1}{2n} < b_1 < q_1 + \frac{1}{2n}$ and it follows that $b_1 = q_1$. Secondly, we have $b_k < q_k^+$ for all $k \geq 1$, as $\bid$ uniquely demands the reject good $0$ at $\qb^+$, which implies $\bid \leq \qb$ by integrality.
\end{proof}

\begin{corollary}
\label{corollary:delta-counting}
    $\Delta(\qb)$ is the sum of the weights of all bids $\bid \in \bids$ satisfying $b_1 = q_1$ and $\bid \leq \qb$.
\end{corollary}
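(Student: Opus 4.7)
The plan is to unpack the definition $\Delta(\qb) = x^-_1 - x^+_1$ and decompose the contributions of individual bids to the first coordinate of the demanded bundles at $\qb^\pm$. Since $\qb^+$ and $\qb^-$ are non-marginal by construction (as noted after Definition~\ref{def:delta-query}), each bid $\bid \in \bids$ uniquely demands some single good at each of the two price vectors, and because all bids in the present section have positive weight, the first coordinates $x^\pm_1$ decompose as
\[
x^\pm_1 \;=\; \sum_{\bid \in \bids \,:\, \bid \text{ demands good } 1 \text{ at } \qb^\pm} w(\bid).
\]

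The next step is to classify bids according to their behaviour across the pair $(\qb^-,\qb^+)$. A bid that demands the \emph{same} good at both prices contributes identically to $x^-_1$ and $x^+_1$ (either $w(\bid)$ to both if that good is 1, or $0$ to both otherwise), and hence cancels in the difference. So only bids that demand different goods at $\qb^-$ and $\qb^+$ contribute to $\Delta(\qb)$. By Lemma~\ref{lemma:delta-query}, those bids are exactly the ones satisfying $b_1 = q_1$ and $\bid \leq \qb'$; furthermore the same lemma tells us precisely how they switch, namely good $1$ at $\qb^-$ and good $0$ at $\qb^+$. Thus each such bid contributes $w(\bid)$ to $x^-_1$ and $0$ to $x^+_1$, giving
\[
\Delta(\qb) \;=\; \sum_{\bid \,:\, b_1 = q_1,\; \bid \leq \qb'} w(\bid).
\]

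The final small step is to rewrite the condition $\bid \leq \qb'$ as $\bid \leq \qb$. This uses integrality of $\bid$ together with the explicit form $\qb' = \qb + \sum_{i=2}^{n} \frac{1}{2(n-i+1)} \eb^i$: for $i \geq 2$, we have $0 < q'_i - q_i < 1$, so $b_i \leq q'_i$ is equivalent to $b_i \leq q_i$ for integral $b_i$; and the first coordinates already agree, $q'_1 = q_1$. Substituting gives the claimed equality.

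I do not foresee a genuine obstacle here; the corollary is essentially a bookkeeping consequence of Lemma~\ref{lemma:delta-query}. The only point requiring a moment's care is that the decomposition of demand into per-bid contributions relies on both the non-marginality of $\qb^\pm$ (so that each bid's contribution is a single unambiguous $w(\bid)\eb^{i(\bid)}$) and the positivity of weights (so that no cancellation between bids can mask a switch); both are in force in this section.
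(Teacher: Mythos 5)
Your proof is correct and follows essentially the same route as the paper, which presents Corollary~\ref{corollary:delta-counting} as an immediate bookkeeping consequence of Lemma~\ref{lemma:delta-query} and sketches exactly this per-bid decomposition in the surrounding prose. The only point worth flagging is a small inaccuracy in your commentary, not in the proof steps themselves: you twice attribute a role to the positivity of weights (once to justify $x^\pm_1 = \sum_{\bid \text{ demands } 1} w(\bid)$, and once at the end, ``so that no cancellation between bids can mask a switch''). Neither step actually requires positivity. At a non-marginal price the demanded bundle is by definition the signed sum $\sum_{\bid \in \bids} w(\bid)\,\eb^{i(\bid)}$ irrespective of the signs of the weights, and the corollary asserts precisely an algebraic sum of weights, so internal cancellation is not an obstruction. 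This matters because the paper's footnote explicitly observes that Lemma~\ref{lemma:delta-query} carries over verbatim to lists with positive and negative bids, and the generalised delta queries of Section~\ref{sec:generalbids} rely on exactly this sign-agnostic version of the counting identity. Positivity of weights is used elsewhere in Section~\ref{sec:positivebids} --- to guarantee that $\Delta$ is monotone along the search lines and that $\Delta(\qb) > 0$ certifies the existence of a bid dominated by $\qb$ --- but it plays no role in this corollary.
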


\subsubsection{The Algorithm}
\begin{algorithm}[tb!]
\caption{Learning Positive Bids}
\label{alg:bid-finder}
\begin{algorithmic}[1]
    \STATE Perform binary search to find largest price $\pb \in \left \{(k, M, \ldots, M) \mid k \in [M]_0 \right \}$ at which $\Delta(\pb) > 0$, and fix $x_1 = p_1$.
    \FOR{$i = 2 \ldots n$}
        \STATE Binary search to find smallest price $\pb \in \left \{ (x_1, \ldots, x_{i-1}, k, M, \ldots, M) \mid k \in [M]_0 \right \}$ at which $\Delta(\pb) > 0$, and fix $x_i = p_i$.
    \ENDFOR
    \RETURN bid vector $\xb = (x_1, \ldots, x_n)$ and weight $\Delta(\xb)$.
\end{algorithmic}
\end{algorithm}

Algorithm~\ref{alg:bid-finder} learns the vector $\xb$ and weight of a single positive bid with $O(n \log M)$ queries. It determines the value of $x_i$, $i \in [n]$, by performing a binary search on line segment $L_i \coloneqq \{(x_1, \ldots, x_{i-1}, z, M, \ldots, M) \mid 0 \leq z \leq M \}$, where the values of $x_1, \ldots, x_{i-1}$ have already been determined and are fixed. As $L_i$ is well-ordered, we can define the `smallest' and `largest' points on~$L_i$ as $\sbold^i \coloneqq (x_1, \ldots, x_{i-1}, 0, M, \ldots, M)$ and $\lb^i \coloneqq (x_1, \ldots, x_{i-1}, M \ldots, M)$.

In a first step, Algorithm~\ref{alg:bid-finder} performs binary search on $L_1$ in order to find the largest point at which demand for good~1 is positive. Note that at any $\pb \in L_1$, no bid demands items of good $i \geq 2$, i.e.~every bid demands an item of good 0 or 1 (or is indifferent between the two). Moreover, the function mapping prices $\pb$ on $L_1$ to the demand of good $1$ at~$\pb$ is monotonically decreasing and changes only at integral points, as the bids are integral. As $B$ items of good 1 are demanded at $\sbold^1$, there is a largest price $\pb^* \in L_1$ at which demand is positive. Hence, we can find $\pb^*$ using binary search on $L_1$ by querying demand at $O(\log M)$ prices of the form $(k, M, \ldots, M)$ with $k \in [M]_0$.

The second kind of binary search uses delta queries to find the smallest point $\pb^*$ for each line segment $L_i$, $i \geq 2$, at which $\Delta(\pb^*)$ is positive. Suppose $i \geq 2$. Corollary~\ref{corollary:delta-counting} implies that $\Delta(\qb)$ restricted to the line $L_i$ is monotonically increasing. We show in the proof of Theorem~\ref{thm:bid-finding-algorithm} that the invariant $\Delta(\lb^i) > 0$ holds when we perform binary search on~$L_i$. Moreover, $\Delta$ only changes in value at integral points along $L_i$, so we can perform binary search to find~$\pb^*$ with $O(\log M)$ delta queries at prices $(x_1, \ldots, x_{i-1}, k, M, \ldots, M)$, where $k \in [M]_0$.

Theorem~\ref{thm:bid-finding-algorithm} establishes the correctness and running time of Algorithm~\ref{alg:bid-finder}. The proof proceeds by induction and makes use of Observation~\ref{observation:bid-restrictions}.

\begin{observation}\label{observation:bid-restrictions}
    Let $i \geq 2$. Suppose Algorithm~\ref{alg:bid-finder} has successfully determined the first ${i-1}$ coordinates $x_1, \ldots, x_{i-1}$ and binary search (using delta queries) finds the smallest point $\pb = (x_1, \ldots, x_i, M, \ldots, M)$ on $L_i$ at which $\Delta(\pb) > 0$. Corollary~\ref{corollary:delta-counting} implies the following.
    \begin{description}
        \item[(a)] None of the bids $\bid \in \bids$ satisfy $\bid \leq \pb$ and $b_i < p_i$.
        \item[(b)] There is at least one bid $\bid$ with $\bid \leq \pb$ and $b_i = p_i$.
    \end{description}
\end{observation}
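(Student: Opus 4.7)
The plan is to derive both parts directly from Corollary~\ref{corollary:delta-counting} applied along coordinate $i$, crucially exploiting the hypothesis that every bid in $\bids$ has a \emph{positive} weight. Definition~\ref{def:delta-query} distinguishes coordinate~$1$, but the construction is symmetric under relabelling of coordinates, so the analogous statement holds when the perturbation is along $\eb^i$: the quantity $\Delta(\pb)$ used in the binary search for coordinate $i$ equals $\sum_{\bid \leq \pb,\; b_i = p_i} w(\bid)$. I would begin the proof by making this re-indexing explicit, then treat (b) and (a) in turn.

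Part~(b) is then immediate: since $\Delta(\pb) > 0$ and every summand $w(\bid)$ is strictly positive, the index set cannot be empty, so at least one bid $\bid$ with $\bid \leq \pb$ and $b_i = p_i$ exists.

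For part~(a) I would argue by contradiction. Suppose some $\bid \in \bids$ satisfies $\bid \leq \pb$ and $b_i < p_i$. Define $\pb' \in L_i$ by replacing the $i$-th coordinate of $\pb$ with $b_i$, i.e.\ $\pb' = (x_1, \ldots, x_{i-1}, b_i, M, \ldots, M)$. Since $0 \leq b_i < p_i \leq M$, the point $\pb'$ lies on $L_i$ and is strictly smaller than $\pb$ in the natural ordering on~$L_i$. A quick check on each coordinate range shows $\bid \leq \pb'$, and by construction $b_i = p'_i$, so by the re-indexed corollary this bid contributes $w(\bid) > 0$ to $\Delta(\pb')$. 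Because every other contributing bid has positive weight as well, $\Delta(\pb') \geq w(\bid) > 0$, contradicting the fact that $\pb$ was found as the \emph{smallest} price on $L_i$ with $\Delta > 0$.

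The argument is essentially a bookkeeping exercise once Corollary~\ref{corollary:delta-counting} is available, and I expect no real obstacle. The only care needed is the re-indexing of the delta query from coordinate~$1$ to coordinate~$i$ (which is purely notational), and the verification that the witness point $\pb'$ constructed in~(a) is genuinely a valid point on~$L_i$ with $\pb' < \pb$; both rely on the non-negativity and boundedness of bid coordinates in $[0,M]$ already established via the definition of~$M$.
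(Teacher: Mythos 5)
The key flaw is the claim that the delta query can be ``re-indexed'' to perturb along $\eb^i$ when searching on $L_i$. This is a misreading of the paper: Definition~\ref{def:delta-query} fixes the perturbation along $\eb^1$, and Algorithm~\ref{alg:bid-finder} invokes the \emph{same} $\Delta$ on every line $L_i$. Consequently Corollary~\ref{corollary:delta-counting} always reads $\Delta(\qb) = \sum_{\bid \,:\, b_1 = q_1,\ \bid \leq \qb} w(\bid)$, regardless of which coordinate you are currently binary searching over. This is not ``purely notational'': the persistent $b_1 = p_1$ condition is precisely what ties the searches on $L_2,\ldots,L_n$ to the same underlying bid. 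Had the delta query genuinely been re-indexed at step $i$, the algorithm could stitch together coordinates from different bids and return a point where no bid sits -- for example with bids $(5,3,10)$ and $(3,2,10)$ your re-indexed step $i=2$ would lock $x_2 = 2$ rather than $3$, eventually producing the bogus output $(5,2,10)$.

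This misreading propagates into both parts. For (b), $\Delta(\pb) > 0$ only yields a bid with $b_1 = p_1$ and $\bid \leq \pb$, \emph{not} one with $b_i = p_i$; the index set of your ``immediate'' sum is the wrong one. To reach $b_i = p_i$ you must also use the minimality of $\pb$ on $L_i$ -- the very ingredient you deploy in (a): since $\Delta(\pb - \eb^i) = 0$, no bid with $b_1 = p_1$ satisfies $\bid \leq \pb - \eb^i$, so the bid witnessing $\Delta(\pb) > 0$ must have $b_i = p_i$. Your argument for (a) has the right shape (lower the $i$-th coordinate of $\pb$ to $b_i$ to build $\pb'$, then contradict minimality), but the contradiction requires the offending bid to actually contribute to $\Delta(\pb')$, which requires $b_1 = p'_1 = p_1$; this only holds once you assume $b_1 = p_1$ for that bid. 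That qualifier is exactly the one Corollary~\ref{corollary:delta-counting} attaches, and both parts of the observation should be read with it carried along -- which is why the paper presents them as immediate consequences of that corollary.
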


\begin{theorem}\label{thm:bid-finding-algorithm}
    Algorithm~\ref{alg:bid-finder} returns the vector and weight of a bid using $O(n \log M)$ queries.
\end{theorem}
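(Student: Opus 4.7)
The plan is to prove the statement by induction on $i$, maintaining the invariant that after step~$i$ of the algorithm there exists a bid $\bid^* \in \bids$ whose first $i$ coordinates equal $(x_1, \ldots, x_i)$. Once this invariant is established for $i = n$, the normalisation of $\bids$ forces $\bid^* = \xb$ and hence $\xb \in \bids$, while a separate counting argument will show $\Delta(\xb) = w(\xb)$.

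For the base case ($i = 1$), I will use the fact noted in the discussion following the algorithm that on $L_1$ no bid can strictly prefer any good $j \geq 2$, so the number of units of good~$1$ demanded at $\pb = (k, M, \ldots, M)$ equals the total weight of bids with $b_1 \geq k$. Since all weights are positive, this is a non-increasing step function of $k$ that changes only at integer values and is positive at $k = 0$ (where $B$ units are demanded). A standard binary search with $O(\log M)$ demand queries therefore locates the largest $k = x_1$ at which demand is positive, which equals $\max_{\bid \in \bids} b_1$; hence some bid has $b_1 = x_1$.

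For the inductive step ($i \geq 2$), assume some bid $\bid^\circ \in \bids$ has first $i-1$ coordinates equal to $(x_1, \ldots, x_{i-1})$. By Corollary~\ref{corollary:delta-counting}, $\Delta$ restricted to $L_i$ at $\qb = (x_1, \ldots, x_{i-1}, k, M, \ldots, M)$ counts the total weight of bids $\bid$ with $b_1 = x_1$ and $\bid \leq \qb$; this set grows monotonically with $k$, and since $\bid^\circ \leq \lb^i$ and weights are positive we have $\Delta(\lb^i) \geq w(\bid^\circ) > 0$. Because $\Delta$ changes only at integer~$k$, $O(\log M)$ delta queries suffice to find the smallest $k = x_i$ at which $\Delta > 0$. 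The key step is then to re-establish the invariant: Observation~\ref{observation:bid-restrictions}(b) (equivalently, positivity of $\Delta$ at $\pb_i = (x_1, \ldots, x_i, M, \ldots, M)$ via the corollary) supplies a bid $\bid$ with $b_1 = x_1$, $\bid \leq \pb_i$, and $b_i = x_i$, and I will rule out $b_j < x_j$ for any intermediate $j$ with $2 \leq j < i$ by contradiction: such a $\bid$ would also be counted in $\Delta((x_1, \ldots, x_{j-1}, x_j - 1, M, \ldots, M))$, contradicting the minimality that defined $x_j$ in step~$j$. This gives $b_j = x_j$ for all $j \leq i$, restoring the invariant.

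After $n$ iterations the invariant yields a bid $\bid^*$ equal to $\xb$, so $\xb \in \bids$; the same minimality argument applied to any bid contributing to $\Delta(\xb)$ shows it must coincide with $\xb$, and normalisation of $\bids$ then gives $\Delta(\xb) = w(\xb)$. The total cost is $O(\log M)$ demand queries for the base case plus $n-1$ binary searches of $O(\log M)$ delta queries each (two demand queries per delta query), yielding $O(n \log M)$ queries overall. The main obstacle I anticipate is the minimality-chasing argument in the inductive step: carefully combining Corollary~\ref{corollary:delta-counting} with the fact that $\Delta$ vanished just below each $x_j$ in order to pin down every earlier coordinate of the bid witnessing the invariant.
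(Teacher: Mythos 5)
Your proof is correct and follows essentially the same approach as the paper's: both arguments rest on Corollary~\ref{corollary:delta-counting} and Observation~\ref{observation:bid-restrictions} to establish well-definedness (via Observation~\ref{observation:bid-restrictions}(b)) and then to pin down the bid's coordinates (via the minimality of each $x_j$, which is precisely Observation~\ref{observation:bid-restrictions}(a)). The only cosmetic difference is that you fold the existence and coordinate-pinning arguments into a single inductive invariant, whereas the paper first proves well-definedness by induction and then performs a separate backward pass at the end; the underlying logic is identical.
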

\begin{proof}
First we show that the algorithm is well-defined. Clearly, the first coordinate $x_1$ can be found, as outlined above. To show that the binary search on $L_i$ using delta queries to find~$x_i$ is well-defined for all $i \geq 2$, it suffices to show that $\Delta(\lb^i)$ is positive. Fix $i \leq 2$ and suppose the algorithm has found $x_1, \ldots, x_{i-1}$. If $\pb$ is the point found by binary searching on line~$L_{i-1}$, then Observation~\ref{observation:bid-restrictions}~(b) tells us that there is at least one bid $\bid$ satisfying $b_1 = p_1$ and $\bid \leq \pb$. Moreover, we have $p_1 = \lb^i_1$ and $\pb \leq \lb^i$ by construction. By transitivity and Corollary~\ref{corollary:delta-counting}, it follows that~$\Delta(\lb^i) > 0$.

Next, we prove that $\xb = (x_1, \ldots, x_n)$ returned by the algorithm corresponds to the vector of a bid and that $\Delta(\xb)$ is its weight. By Observation~\ref{observation:bid-restrictions}~(b) and construction of $x_n$, we know that there is a bid $\bid^*$ that satisfies $\bid^* \leq \xb$ and $b^*_n = x_n$. On the other hand, for any $1 \leq i \leq n-1$, the construction of~$x_i$ together with Observation~\ref{observation:bid-restrictions} (a) implies that no bid $\bid$ satisfies $\bid \leq \xb$ and $b_i < x_i$. As a result, we get $b^*_i = x_i$ for all $1 \leq i \leq n$ and it follows that~$\bid^*$ lies at point $\xb$. Moreover, we see that $\bid^*$ is the only bid in $\bids$ that satisfies $b_1 = x_1$ and $ \bid \leq \xb$. By Corollary~\ref{corollary:delta-counting}, $\Delta(\xb)$ is the weight of $\bid^*$.

Finally, to see that the algorithm has query complexity $O(n \log M)$, note that it performs~$n$ binary searches along lines $L_i$, each of which incurring $O(\log M)$ queries.
\end{proof}

\section{Learning Positive and Negative Bids}\label{sec:generalbids}
In this section, we provide an algorithm for learning valid bid lists that may contain both positive and negative bids. Introducing negative bids significantly complicates learning the location of bids, as negative bids are able to `cancel out' facets. Consider, for instance, a bid $\bid$ of weight 2 at position $(2,4)$ in the setting with two goods. Without negative bids, its horizontal and vertical facets extend indefinitely from $\bid$ in direction $\eb^1$ and $\eb^2$. In contrast, Figure~\ref{fig:polyhedral-complex} demonstrates how two negative bids of weight 1 cancel out $\bid$'s horizontal facet from $(7,4)$ onwards. Similarly, in the same figure the vertical facet of the bid at $(4,2)$ is cancelled out by the negative bid at $(4,4)$. As a result, some bids may only be detectable if we query in their local neighbourhood; for more details we refer to Section~\ref{sec:pos-neg-lower-bound}, where we exploit this phenomenon to show super-polynomial lower bound on learning positive and negative bids.

Recall from Section~\ref{sec:lip} that the demand correspondence of a SS demand (and hence of a valid bid list) corresponds to a polyhedral complex over price space, the LIP, where the boundaries between unique demand regions are $(n-1)$-dimensional facets.
Our algorithm learns the collection of all hyperplanes that contain these facets, as well as each vertex arising from the intersection of $n$ such hyperplanes. We note that every bid must lie at a vertex but, conversely, not every such vertex contains a bid. We introduce {\em bid existence queries} at integral price vectors, which use a set of $2^{n}$ demand queries in the local neighbourhood of the vector in order to check for existence of a bid there. In addition, we define (more powerful and costly) {\em super queries} which for integral points~$\pb$, use a collection of demand queries to provide complete information about the demand correspondence in the local neighbourhood of $\pb$. This allows us to simulate a bid existence query with a super query and also to perform a principled search for new hyperplanes: at each iteration of the algorithm, either the local information around two vertices points us in the direction of a new hyperplane, or we have succeeded in learning all hyperplanes, and thus all bids.

\subsection{Bid Existence Queries}
\label{sec:existence-queries}
Our algorithm computes a set of integral candidate price vectors at which bids may lie. In order to check for any such candidate vector $\pb \in \Zn$ whether there is a bid at this location and, if so, what its weight is, we first introduce a generalisation of the delta queries that were first introduced in Section~\ref{sec:finding-positive-bid}. This in turn allows us to define the \textit{bid existence query}, which consists of $2^{n-1}$ generalised delta queries. In total, we see that as each generalised delta query consists of two elementary demand queries, a total of $2^n$ queries to the demand oracle is sufficient to determine the existence and weight of a bid at a given integral price vector.

\subsubsection{Generalised delta queries}
Recall that a delta query at $\qb \in \Z^n$ works by first perturbing the entries $i \geq 2$ of $\qb$ in a specific way. We now generalise the directions in which these perturbations happen. For any point $\qb \in \Z^n$ and every $S \subseteq \{2, \ldots, n \}$, we define $\qb(S)$ by $q(S)_1 := q_1$ and $q(S)_i := q_i - \frac{1}{2(n-i+1)}$ if $i \in S$ and $q(S)_i := q_i + \frac{1}{2(n-i+1)}$ otherwise for all $i \geq 2$. This allows us to define a \emph{generalised delta query} at~$\qb$ w.r.t.~$S$ as follows.

\begin{definition}
    Let $\qb \in \Zn$ and $S \subseteq \{2, \ldots, n \}$. A (generalised) delta query consists of two queries $\qb^+(S) := \qb(S) + \frac{1}{2n}\eb^1$ and $\qb^-(S) := \qb(S) - \frac{1}{2n}\eb^1$, where we let
    \[
    \qb(S) := \qb - \sum_{i \in S} \frac{1}{2(n-i+1)} \eb^i + \sum_{i \not \in S \cup \{ 1 \}} \frac{1}{2(n-i+1)} \eb^i.
    \]
    The return value of the delta query is defined as $\Delta(\qb;S) = x^-_1 - x^+_1$, where $\xb^+$ and $\xb^-$ are the bundles of goods uniquely demanded at $\qb^+$ and $\qb^-$.
\end{definition}

Note that $q^+(S)_1 = q(S)_1 + \frac{1}{2n}$ and $q^-(S)_1 = q(S)_1 - \frac{1}{2n}$, and the two query points $\qb^+(S)$ and $\qb^-(S)$ agree on all other coordinates $i \geq 2$. Secondly, $\qb^\pm$ are non-marginal prices by construction, so every bid uniquely demands some good $i$.


Lemma~\ref{lemma:generalised-delta-query} makes the observation that bids satisfying $b_1 = q_1$ and $\bid \leq \qb(S)$ demand good $1$ at $\qb^-$ and are rejected at~$\qb^+$, while all other bids demand the same good at both prices $\qb^\pm$. Hence demand changes only in terms of good $1$, and $\Delta(\qb;S)$ captures this change. Another useful interpretation of $\Delta(\qb;S)$ is that it returns the sum of the weights of all bids that satisfy $b_1 = q_1$ and $\bid \leq \qb$.

\begin{lemma}\label{lemma:generalised-delta-query}
    Suppose we make a delta query at $\qb(S) \in \Zn$ w.r.t.~$S \subseteq \{2, \ldots, n \}$. Then any bid $\bid \in \Zn$ demands different goods at $\qb^-$ and $\qb^+$ if and only if $b_1 = q_1$ and $\bid \leq \qb(S)$. Moreover, any such bid demands an item of good 1 at $\qb^-$ and an item of reject good 0 at $\qb^+$.
\end{lemma}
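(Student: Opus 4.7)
The proof will closely follow the structure of Lemma~\ref{lemma:delta-query}, with extra care for the directional sign of each perturbation, which now depends on whether $i \in S$. The key observation is that for every $i \geq 2$ the point $\qb(S)_i$ is the integer $q_i$ shifted by $\pm \frac{1}{2(n-i+1)}$ (negative if $i\in S$, positive otherwise); in particular $\frac{1}{2(n-i+1)} > \frac{1}{2n}$ for all $i \geq 2$, so $\qb^+(S)$ and $\qb^-(S)$ are non-marginal and the perturbation along coordinate $i \geq 2$ dominates the one along coordinate $1$.

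For the forward direction, I would take an integer bid $\bid$ with $b_1 = q_1$ and $\bid \leq \qb(S)$ and show the claim coordinate by coordinate. For $i \geq 2$, integrality of $b_i$ combined with $b_i \leq q(S)_i$ forces $b_i \leq q_i - 1$ when $i \in S$ and $b_i \leq q_i$ when $i \notin S$; in either case a short calculation yields $b_i - q^\pm(S)_i < 0$, so neither $\qb^+(S)$ nor $\qb^-(S)$ can make $\bid$ demand any good $i \geq 2$. The remaining comparison is between goods $0$ and $1$: since $b_1 - q^-(S)_1 = \tfrac{1}{2n} > 0$ and $b_1 - q^+(S)_1 = -\tfrac{1}{2n} < 0$, bid $\bid$ uniquely demands good $1$ at $\qb^-(S)$ and the reject good $0$ at $\qb^+(S)$.

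For the converse, I assume $\bid$ demands distinct goods $i$ and $j$ at $\qb^-(S)$ and $\qb^+(S)$ respectively, and use the two inequalities (a)~$b_i - q^-(S)_i > b_j - q^-(S)_j$ and (b)~$b_j - q^+(S)_j > b_i - q^+(S)_i$ to narrow down $(i,j)$ to $(1,0)$. Whenever $i \in \{0\} \cup \{2,\dots,n\}$ we have $q^-(S)_i = q^+(S)_i$, so adding (a) and (b) forces $q^-(S)_j > q^+(S)_j$, contradicting the definitions. Hence $i=1$. If additionally $j \geq 2$, then $b_j - q(S)_j$ would have to lie within $\frac{1}{2n}$ of the integer $b_1 - q_1$, but $b_j - q(S)_j$ differs from an integer by exactly $\frac{1}{2(n-j+1)} > \frac{1}{2n}$, a contradiction. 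Thus $j = 0$, and the inequalities immediately give $b_1 = q_1$ by integrality of $b_1$ and $q_1$. Finally, $\bid$ being rejected at $\qb^+(S)$ yields $b_k < q^+(S)_k$ for every $k \geq 1$; integrality together with the sign of the perturbation (whether $k\in S$ or not) then upgrades this to $b_k \leq q(S)_k$, giving $\bid \leq \qb(S)$.

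The only mild obstacle is the case analysis on $k \in S$ versus $k \notin S$ in the last step, but it is purely mechanical: the inequality $b_k < q^+(S)_k$ combined with integrality of $b_k$ always rounds down to the relevant $q(S)_k$, because the two endpoints of the perturbation differ by less than $1$. No new ideas beyond those in the proof of Lemma~\ref{lemma:delta-query} are needed; the generalisation is a careful sign-tracking exercise.
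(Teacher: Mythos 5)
Your proof is correct and carries out exactly the argument the paper intends when it writes ``Analogous to the proof of Lemma~\ref{lemma:delta-query}'': you track the signs of the perturbations (depending on whether $i\in S$) and verify that the three steps of the original proof (excluding $i\neq 1$, excluding $j\geq 2$, and then deducing $b_1=q_1$ and $\bid\leq\qb(S)$) go through unchanged. One tiny remark: the final step is even simpler than you suggest, since for $k\geq 2$ we have $q^+(S)_k = q(S)_k$, so $b_k < q^+(S)_k$ already gives $b_k \leq q(S)_k$ directly, with no case analysis on $k\in S$ or integrality needed.
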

\begin{proof}
Analogous to the proof of Lemma~\ref{lemma:delta-query}.
\end{proof}

\subsubsection{Defining Bid Existence Queries}
In order to check for the existence (and weight) of a bid at a given price vector $\pb \in \Zn$, we introduce the existence query consisting of $2^{n-1}$ generalised delta queries. Indeed, an existence query performed at $\pb \in \Zn$ consists of the set of delta queries $\{ \Delta(\pb;S) \mid S \subseteq \{2, \ldots, n\}$. We now show how the existence and weight of a bid at $\pb$ can be inferred from the return values of these $2^{n-1}$ delta queries. For convenience, we introduce the notation $w(\mathcal{C}) = \sum_{\bid \in \mathcal{C}} w(\bid)$ for any list of bids $\mathcal{C}$. Our approach is based on the following straightforward observation.
\begin{observation}
\label{obs:existence-condition}
    For any bid list $\bids$, define the two sub-lists $\bids' = \{ \bid \in \bids \mid b_1 = p_1, \bid \leq \pb \}$ and $\bids'' = \{ \bid \in \bids \mid b_1 = p_1, \bid \leq \pb, \bid \neq \pb \}$. There exists a bid at $\pb$ in $\bids$ if and only if the term $w := w(\bids') - w(\bids'')$ is non-zero. Moreover, if the bid exists, then its weight is $w$.
\end{observation}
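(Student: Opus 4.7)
My plan is to treat the observation as essentially an accounting identity about the normalised bid list and to extract it from the set-theoretic difference $\bids' \setminus \bids''$. Concretely, I would first note that $\bids'' \subseteq \bids'$ by definition, so $w(\bids') - w(\bids'') = \sum_{\bid \in \bids' \setminus \bids''} w(\bid)$, and everything reduces to analysing $\bids' \setminus \bids''$.

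Next I would unpack the defining conditions: a bid $\bid$ lies in $\bids' \setminus \bids''$ iff $\bid \in \bids$ with $b_1 = p_1$, $\bid \leq \pb$ and it is not the case that $\bid \neq \pb$ — i.e.~$\bid = \pb$. Since $\bid = \pb$ already implies both $b_1 = p_1$ and $\bid \leq \pb$, this condition simplifies to $\bid \in \bids$ with $\bid = \pb$. Because the bid list is normalised (Section \ref{sec:dot-bids}), no two bids share the same vector, so $\bids' \setminus \bids''$ is either empty (when there is no bid at $\pb$) or the singleton $\{\bid^*\}$ where $\bid^*$ is the unique bid located at $\pb$.

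Combining these two observations gives both claims in one line: $w = w(\bids') - w(\bids'') = 0$ in the empty case, and $w = w(\bid^*)$ otherwise. The \emph{if and only if} direction relies on the mild convention that bids appearing in a normalised list have nonzero weight (otherwise they would be trimmed). With this convention, $w \neq 0$ is equivalent to $\bids' \setminus \bids''$ being non-empty, which is equivalent to the existence of a bid at $\pb$, and in that case $w$ is precisely its weight.

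No step here is a serious obstacle; the only point that deserves care is explicitly invoking the normalisation of $\bids$, since without it one could have several bids at $\pb$ whose weights could collectively sum to $w$, or a spurious zero-weight bid at $\pb$ that would break the equivalence. Once normalisation is in place, the proof is a one-line manipulation of the indicator sums, so I would keep the write-up correspondingly short.
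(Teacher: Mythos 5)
Your proof is correct, and it fleshes out exactly the reasoning the paper treats as self-evident (the paper states the observation without proof, calling it "straightforward"). You rightly reduce everything to $\bids' \setminus \bids'' = \{\bid \in \bids \mid \bid = \pb\}$ and invoke normalisation to guarantee this set has at most one element with nonzero weight; that is the only point that needs saying, and you say it.
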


Note that $\bids'$ contains all bids dominated by $\pb$ and we can compute $w(\bids')$ with a single delta query $\Delta(\pb, \emptyset)$. Moreover, the set $\bids''$ contains all bids that are dominated by $\pb$ and do not lie at $\pb$. In order to determine $w(\bids'')$, we first partition $\bids''$ into $2^{n-1}-1$ sub-lists as follows. For each non-empty set of goods $T \subseteq \{2, \ldots, n\}$, we let $\bids_T := \{ \bid \in \bids \mid b_i < p_i, \forall i \in T \text{ and } b_j = p_j, \forall j \not \in T \}$. Lemma~\ref{lemma:computing-existence-query} now provides us with a method of computing $w(\bids'')$ using the $2^{n-1}-1$ delta queries $\Delta(\pb,S)$ with $\emptyset \subsetneq S \subseteq \{2, \ldots, n\}$.

\begin{lemma}\label{lemma:computing-existence-query}
    We have
    \begin{align*}
        w(\bids'') = \sum_{\emptyset \subsetneq T \subseteq \{2, \ldots, n\}} w(\bids_T) = \sum_{\emptyset \subsetneq S \subseteq \{2, \ldots, n\}} (-1)^{|S|+1} \Delta(\pb;S).
    \end{align*}
\end{lemma}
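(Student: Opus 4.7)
The plan is to establish the two equalities separately, the first by direct partitioning and the second by Möbius (inclusion–exclusion) inversion over the subset lattice of $\{2,\ldots,n\}$.

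For the first equality, I would argue that the sets $\{\bids_T\}_{\emptyset \subsetneq T \subseteq \{2,\ldots,n\}}$ partition $\bids''$. Take any $\bid \in \bids''$, so $b_1 = p_1$, $\bid \leq \pb$, $\bid \neq \pb$. Define $T(\bid) := \{ i \in \{2,\ldots,n\} : b_i < p_i \}$; then $b_j = p_j$ for all $j \in \{2,\ldots,n\} \setminus T(\bid)$, and since $b_1 = p_1$ and $\bid \neq \pb$, the set $T(\bid)$ is non-empty. Thus $\bid$ lies in exactly one $\bids_T$ with $T \neq \emptyset$, giving $w(\bids'') = \sum_{\emptyset \subsetneq T} w(\bids_T)$.

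For the second equality, the key step is to identify $\Delta(\pb;S)$ as a sum of $w(\bids_T)$ over the appropriate upward-closed family. By Lemma~\ref{lemma:generalised-delta-query}, $\Delta(\pb;S)$ is the total weight of bids with $b_1 = p_1$ and $\bid \leq \pb(S)$. Using integrality of $\bid$ together with the defining perturbations $p(S)_i = p_i - \tfrac{1}{2(n-i+1)}$ for $i \in S$ and $p(S)_j = p_j + \tfrac{1}{2(n-i+1)}$ for $j \in \{2,\ldots,n\}\setminus S$, the condition $\bid \leq \pb(S)$ is equivalent to $b_i < p_i$ for all $i \in S$ and $b_j \leq p_j$ for all $j \in \{2,\ldots,n\}\setminus S$. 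Classifying each such bid by its full set of strict inequalities $T = \{ i \in \{2,\ldots,n\} : b_i < p_i\}$ gives
\begin{equation*}
\Delta(\pb;S) \;=\; \sum_{T \supseteq S,\; T \subseteq \{2,\ldots,n\}} w(\bids_T).
\end{equation*}

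Finally, substitute this expression and swap the order of summation:
\begin{equation*}
\sum_{\emptyset \subsetneq S \subseteq \{2,\ldots,n\}} (-1)^{|S|+1} \Delta(\pb;S)
\;=\; \sum_{T \subseteq \{2,\ldots,n\}} w(\bids_T) \sum_{\emptyset \subsetneq S \subseteq T} (-1)^{|S|+1}.
\end{equation*}
The inner coefficient is $0$ when $T = \emptyset$ (the range is empty), and for $T \neq \emptyset$ the binomial identity $\sum_{k=0}^{|T|} \binom{|T|}{k}(-1)^k = 0$ gives $\sum_{\emptyset \subsetneq S \subseteq T} (-1)^{|S|+1} = 1 - (1-1)^{|T|} = 1$. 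The right-hand side therefore collapses to $\sum_{\emptyset \subsetneq T} w(\bids_T) = w(\bids'')$, as required.

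The only subtle point—hardly an obstacle but worth stating carefully—is the translation of the non-strict inequality $\bid \leq \pb(S)$ into a coordinate-wise mix of strict and non-strict inequalities relative to $\pb$, which crucially uses the integrality of the bid vector and the fact that the perturbations have magnitude strictly less than $1$. The remainder is a clean double-counting argument.
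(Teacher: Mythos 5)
Your proof is correct and follows essentially the same route as the paper: identify $\Delta(\pb;S)=\sum_{T\supseteq S}w(\bids_T)$ via integrality of the bids and the sign pattern of the perturbation, then swap the order of summation and evaluate the inner alternating sum. The only cosmetic differences are that you spell out the partition argument for the first equality (which the paper states just before the lemma), you explicitly keep the inner sum over \emph{nonempty} $S\subseteq T$ when swapping, and you evaluate $\sum_{\emptyset\subsetneq S\subseteq T}(-1)^{|S|+1}=1$ directly via $(1-1)^{|T|}=0$ rather than invoking the paper's auxiliary Lemma~\ref{lemma:binom-identity}.
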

\begin{proof}
    Firstly, observe how the result of the delta query $\Delta(\pb;S)$ for a given $S \subseteq \{2, \ldots, n\}$ satisfies $\Delta(\pb;S) = \sum_{T \supseteq S} w(\bids_T).$
    Indeed, recall that $\Delta(S)$ sums the weights of the bids $\bid \in \bids$ satisfying $b_1 = p_1$ and $\bid \leq \pb(S)$. As we have $p(S)_i < p_i$ for all $i \in S$ and the bids are integral, the set of bids that $\Delta(\pb;S)$ considers is identical to $\bigcup_{T \supseteq S} \bids_T$. Hence
    \begin{align*}
        \sum_{\emptyset \subsetneq S \subseteq \{2, \ldots, n\}} (-1)^{|S|+1} \Delta(\pb;S)
        &= \sum_{\emptyset \subsetneq S \subseteq \{2, \ldots, n\}} (-1)^{|S|+1}\sum_{T \supseteq S} w(\bids_T) \\
        &= \sum_{\emptyset \subsetneq T \subseteq \{2, \ldots, n\}} w(\bids_T)\sum_{S \subseteq T} (-1)^{|S|+1} \\
        &= \sum_{\emptyset \subsetneq T \subseteq \{2, \ldots, n\}} w(\bids_T).
    \end{align*}
    The last equality follows from the binomial identity in Lemma~\ref{lemma:binom-identity} below.
\end{proof}

\begin{lemma}
\label{lemma:binom-identity}
For any $n \geq 1$ we have $\sum_{i=1}^n (-1)^{i+1} \binom{n}{i} = 1$.
\end{lemma}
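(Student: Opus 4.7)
The plan is to derive this identity as a direct consequence of the binomial theorem applied to $(1-1)^n$. Specifically, I would write
\[
0 = (1-1)^n = \sum_{i=0}^n \binom{n}{i} (-1)^i = 1 + \sum_{i=1}^n (-1)^i \binom{n}{i},
\]
valid for $n \geq 1$. Rearranging gives $\sum_{i=1}^n (-1)^i \binom{n}{i} = -1$, and multiplying through by $-1$ yields $\sum_{i=1}^n (-1)^{i+1} \binom{n}{i} = 1$, as required.

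There is no real obstacle here, as this is a textbook consequence of the binomial expansion; the only minor care needed is to split off the $i=0$ term so that the sum starts at $i=1$ in line with the statement. If a combinatorial alternative were preferred, one could instead argue by inclusion–exclusion that $\sum_{i=1}^n (-1)^{i+1}\binom{n}{i}$ counts $|\bigcup_{k=1}^n A_k|$ for the cover $A_k = [n]\setminus\{k\}^c$ and observe the union equals $[n]$, but the binomial-theorem route is the cleanest and is what I would include.
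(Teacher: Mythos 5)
Your binomial-theorem argument is correct and complete: expanding $(1-1)^n = 0$ for $n \geq 1$, splitting off the $i=0$ term, and negating gives exactly the claimed identity. The paper takes a different, slightly more hands-on route: it applies Pascal's rule $\binom{n}{i} = \binom{n-1}{i} + \binom{n-1}{i-1}$ termwise, reindexes, and observes that the two resulting alternating sums cancel, leaving only the single term $\binom{n-1}{0} = 1$. Both arguments are short and correct; yours invokes a standard generating-function fact and so requires no index-shifting bookkeeping, while the paper's is purely combinatorial recursion and avoids appealing to the binomial theorem as a black box. Either would be acceptable in the paper. One small caveat on your aside: the inclusion--exclusion alternative you sketch with $A_k = [n]\setminus\{k\}^c$ is garbled and, as written, would not yield the identity (with any natural reading of the sets, the intersection sizes do not collapse to give $\sum_{i\ge 1}(-1)^{i+1}\binom{n}{i}$). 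The cleanest combinatorial instantiation is to take $n$ events that all hold simultaneously on a one-element universe, so that every intersection has size $1$ and inclusion--exclusion reads $1 = \sum_{i=1}^n (-1)^{i+1}\binom{n}{i}$. But since you correctly flag the binomial-theorem route as the one you would actually include, this does not affect the validity of your submission.
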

\begin{proof}
    Recalling Pascal's Rule, $\binom{n}{i} = \binom{n-1}{i} + \binom{n-1}{i-1}$, we have
    \begin{align*}
        \sum_{i=1}^n (-1)^{i+1} \binom{n}{i} &= \sum_{i=1}^n (-1)^{i+1} \binom{n-1}{i} + \sum_{i=1}^n (-1)^{i+1} \binom{n-1}{i-1} \\
        & = \sum_{i=1}^{n-1} (-1)^{i+1} \binom{n-1}{i} + \sum_{i=1}^{n-1} (-1)^{i} \binom{n-1}{i} + (-1)^2 \binom{n-1}{0} = 1.
    \end{align*}
\end{proof}

\subsection{Super Queries}
\label{sec:superquery}
Suppose $\vec{p} \in \Zn$ is an integral price vector. We show that it is possible to obtain complete knowledge of~$D_\mathcal{B}(\vec{p}')$ for all prices $\vec{p}'$ with ${\|\vec{p} - \vec{p}'\|_\infty < 1}$ using a \emph{super query}, which consists of a specific set of demand queries at non-marginal query points in the vicinity of $\pb$. Intuitively, this works because bid vectors are integral and facets of an SS LIP can only have specific orientations. Super queries are used by our algorithm in two ways: firstly to determine the existence and weight of a bid at a given integral point $\pb$, and secondly to provide information that leads to a new separating hyperplane.

Let $U_1(\vec{p}),\ldots,U_{2^n}(\vec{p})$ denote the $2^n$ orthants of the unit $L_\infty$-ball around $\vec{p}$. Each orthant is a hypercube that can be triangulated into $n!$ simplices (one for each permutation of the coordinates~$[n]$), as described in Section~\ref{sec:preliminaries}. We denote these simplices for the $i$-th orthant by $U_i^1(\vec{p}),\ldots,U_i^{n!}(\vec{p})$.

\begin{definition}
\label{def:super-queries}
A \emph{super query} at $\pb \in \Zn$ is a collection $\sq(\pb)$ of representative prices from the interior of each $U_i^j(\vec{p})$, where $i \in [2^n]$ and $j \in [n!]$.
\end{definition}

With a slight abuse of notation, we say that we `super query' a price vector $\pb$ if we query all price vectors in $SQ(\pb)$. Figure~\ref{fig:pos-neg-superquery} illustrates super queries in the case of two goods $(n=2)$. The following lemma demonstrates the use of super queries.

\begin{lemma}
\label{lemma:super-query}
Querying the points in $\sq(\pb)$ once is sufficient to ascertain $D_\bids(\pb')$ for any~$\pb'$ with $\| \pb - \pb' \|_\infty < 1$. With this information, we can learn all facets of the LIP containing~$\vec{p}$ and establish the existence and weight of a bid at $\pb$.
\end{lemma}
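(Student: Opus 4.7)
The plan is to prove the three assertions of the lemma in order: that the $2^n \cdot n!$ demand queries in $\sq(\pb)$ determine $D_\bids(\pb')$ throughout the open unit ball $B^1_\pb$, that they reveal all LIP facets containing $\pb$, and that they determine the existence and weight of any bid located at $\pb$. The bulk of the work is in the first assertion; the other two will follow from it.

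For the first assertion, the key structural fact is that every LIP hyperplane has the form $\{\pb' : p_i' = c\}$ or $\{\pb' : p_i' - p_j' = c\}$ for some $c \in \Z$, because $\bids$ is SS and the bid vectors are integer-valued. Since $\pb \in \Zn$, any axis-aligned hyperplane $\{p_i' = c\}$ intersecting $B^1_\pb$ must satisfy $c = p_i$, so it passes through $\pb$ itself and coincides with the boundary between two orthants of the triangulation; such a hyperplane therefore does not cross the interior of any simplex $U_i^j(\pb)$. The more subtle case is that of diagonal hyperplanes $\{p_i' - p_j' = c\}$, for which one has $c - (p_i - p_j) \in \{-1,0,1\}$. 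I would proceed by case analysis on the orthant's sign vector $\vec a$ and on the relative positions of $i$ and $j$ in the simplex-defining ordering $[i_1,\ldots,i_n]$, exploiting the inequalities $a_{i_1}(p_{i_1}' - p_{i_1}) \leq \ldots \leq a_{i_n}(p_{i_n}' - p_{i_n})$ and the SS validity of $\bids$ to show that the representative demand queries collectively pin down $D_\bids$ on each simplex. Once this is established, $D_\bids(\pb')$ is known for every $\pb'$ with $\|\pb - \pb'\|_\infty < 1$.

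For the second assertion, any LIP facet containing $\pb$ must lie on a hyperplane $\{p_i' = p_i\}$ or $\{p_i' - p_j' = p_i - p_j\}$, and each such hyperplane is precisely the separating boundary of a pair of adjacent simplices in the triangulation. By comparing the demanded bundles at the representatives of these adjacent simplices, one identifies which such hyperplanes carry a genuine LIP facet and also recovers the jump in demand across it, revealing the facet's normal direction. For the third assertion, I would apply an inclusion-exclusion argument analogous to that of Lemma~\ref{lemma:computing-existence-query}: aggregating the changes in demand across the $2^n$ orthants meeting at $\pb$ isolates the contribution of any bid located at $\pb$ and returns its weight, while zero net contribution certifies the absence of a bid there. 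The main obstacle is the case analysis in the first assertion, specifically ruling out (or properly accounting for) diagonal LIP hyperplanes that intersect $B^1_\pb$ without passing through $\pb$; this step requires delicately combining integrality of bid vectors, integrality of $\pb$, the ordering structure of the simplicial triangulation, and the SS validity condition from Theorem~\ref{thm:valid-bid-list}.
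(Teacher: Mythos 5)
Your proposal correctly isolates the genuinely delicate step in this lemma: a diagonal LIP hyperplane $\{\pb' : p_i' - p_j' = c\}$ may meet the open unit ball around $\pb$ with $c - (p_i - p_j) \in \{-1,+1\}$, i.e.\ without passing through $\pb$. (The paper's own proof skips exactly this: it simply asserts that ``any price in the interior of $U_i^j(\pb)$ demands the same unique bundle'' as an immediate consequence of the facet-normal restrictions.) Where your proposal has a genuine gap is that you defer this case as ``the main obstacle'' and do not resolve it --- and in fact the case analysis you sketch \emph{cannot} succeed with the triangulation as defined in Section~\ref{sec:preliminaries}. The simplex-separating hyperplanes there are $\{a_{i_k} x_{i_k} = a_{i_{k+1}} x_{i_{k+1}}\}$, which for $a_i \neq a_j$ is $\{x_i = -x_j\}$, not an SS facet orientation; whereas the SS-admissible cut $\{x_i - x_j = \pm 1\}$ through such a mixed-sign orthant \emph{does} traverse simplex interiors. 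Concretely: with $n=2$, $\pb = (0,0)$, and the valid bid list consisting of a single weight-$1$ bid at $(1,0)$, the facet $\{p_1' - p_2' = 1\}$ crosses the interior of both simplices of the orthant $\vec{a} = (1,-1)$ (whose separating hyperplane is $x_1 + x_2 = 0$), and the demanded bundle jumps from $(1,0)$ to $(0,1)$ across it. So one representative per simplex as defined does not determine demand on that simplex, and no ordering of cases will fix this.

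To make the first assertion go through, the triangulation must be chosen so that its separating hyperplanes coincide with the LIP facets that can cross the orthant. Translating the orthant $O_{\vec{a}}$ by $\tfrac{1}{2}(\eb^{[n]} - \vec{a})$ --- i.e.\ working with $y_i \coloneqq x_i + \tfrac{1-a_i}{2} \in [0,1]$ --- turns the admissible cuts $x_i = x_j$ (for $a_i=a_j$) and $x_i - x_j = a_i$ (for $a_i \neq a_j$) into the single family $\{y_i = y_j\}$, and the standard triangulation $\{y_{\sigma(1)} \leq \cdots \leq y_{\sigma(n)}\}$ of the translated cube has the property that demand really is constant on each open simplex, so one representative per simplex suffices. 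Equivalently, replace $a_{i_k}x_{i_k} \leq a_{i_{k+1}}x_{i_{k+1}}$ by $x_{i_k} + \tfrac{1-a_{i_k}}{2} \leq x_{i_{k+1}} + \tfrac{1-a_{i_{k+1}}}{2}$ in the definition of $U_i^j(\pb)$. With that fix, your plans for the second and third assertions are fine and match the paper: comparing the bundles at representatives of simplices adjacent across a common facet reveals the LIP facets through $\pb$, and the inclusion--exclusion you propose in the spirit of Lemma~\ref{lemma:computing-existence-query} is exactly how the paper concludes, namely by simulating the bid existence query of Section~\ref{sec:existence-queries} from the full local demand picture.
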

\begin{proof}
Suppose that $\vec{p}'$ is such that $\|\vec{p} - \vec{p}'\|_\infty < 1$. From construction, it follows that $\vec{p}' \in U_i^j(\vec{p})$ for some $U_i^j(\vec{p})$. However, we also know that facets in the polyhedral complex resulting from $D_\mathcal{B}$ can only be normal to a unit vector~$\vec{e}^i$, for some $i \in [n]$, or a vector of the form $\vec{e}^i - \vec{e}^j$, for some $i,j \in [n]$. As a consequence, any price in the interior of $U_i^j(\vec{p})$ demands the same unique bundle. If $\vec{p}'$ also lies in the interior of $U_i^j(\vec{p})$, then we know the bundle it demands and we are done. On the other hand, if $\vec{p}'$ is on the boundary $U_i^j(\vec{p})$, then it could be marginal, but if this is the case, from construction, we know all bundles demanded at non-marginal prices neighbouring $\vec{p}'$, as they must lie in the interior of other $U_{i'}^{j'}(\vec{p})$. This in turn implies that we can infer all bundles demanded at $\vec{p}'$ as desired.
Moreover, it follows immediately that $SQ(\vec{p})$ holds all information regarding facets containing $\vec{p}$, as facets can only take specific orientations in a SS demand correspondence. Finally, as have complete information about demand in the unit ball around $\pb$, we can simulate a bid existence query as defined in Section~\ref{sec:existence-queries} at $\pb$ without making any further queries to the demand oracle. This tells us whether there exists a bid at $\pb$ and, if so, what its weight is.

\end{proof}

\subsection{Finding a Separating Hyperplane}
\label{sec:hyperplane-finding}
Suppose $\bm{0} \leq \qb, \qb' \leq (M+1)\eb^{[n]}$ are distinct price vectors that lie in the interiors of different demand regions. Note that $\|\qb - \qb'\|_\infty \leq M+1$. As demand regions are convex and have piecewise-linear boundaries, there exists some facet~$F$ of the LIP separating $\qb$ and $\qb'$. In order to find the hyperplane containing~$F$, we first perform $O (\log M)$ steps of binary search on $\conv(\qb, \qb')$ to obtain a pair of points $\vec{s}$ and $\vec{s}'$ on either side of $F$ with $\|\vec{s} - \vec{s}' \|_\infty \leq 1/4$. By the geometry of SS demand, we know that $F$ must be normal to a vector $\eb^{i}-\eb^{j}$, where $i,j \in [n]_0$ and we denote $\eb^0 = 0$ for convenience.

Suppose we know that $F$ is normal to $\eb^i - \eb^j$ for some fixed $i$ and $j$. Then we can determine the point $\pb = \lambda \vec{s} + (1-\lambda) \vec{s}'$ ($0 \leq \lambda \leq 1$) at which the line segment $\overline{\vec{s}\vec{s}'}$ intersects~$F$, as this is obtained for the unique value $0 \leq \lambda \leq 1$ such that $\pb$ satisfies $p_i - p_j \in \mathbb{Z}$. Next, we round up $\pb$ to the nearest integer and claim that $\lceil \pb \rceil$ also lies in $F$. Indeed, we see that if a bid $\bid$ demands good $i$ at $\pb$, then it also demands $i$ at $\lceil \pb \rceil$:
\[
    b_j - \lceil p_j \rceil - 1 \leq b_j - p_j - 1 \leq b_i - p_i - 1 < b_i - \lceil p_i \rceil.
\]
Here the second inequality follows from the fact that $\bid$ demands good $i$ at $\pb$. As $\bid$ and $\lceil \pb \rceil$ are integral, this chain of inequalities implies $b_i - \lceil p_i \rceil \geq b_j - \lceil p_j \rceil$. Hence, as this holds for every bid, we see that $\pb$ and $\lceil \pb \rceil$ lie in the same facet. We perform a super query at $\lceil \pb \rceil$ in order to find $F$ (as well any other adjacent facets that $\pb$ might lie in).

\begin{algorithm}[t!]
\caption{Finding a Separating Hyperplane}
\label{alg:separating-hyperplane}
\begin{algorithmic}[1]
    \INPUT Two hyperplane witnesses $\qb$ and $\qb'$ that are separated by a hyperplane.
	\STATE Binary search between $\qb$ and $\qb'$ to find a pair of points $\vec{s}, \vec{s}'$ with distance $\|\vec{s} - \vec{s}'\|_\infty < 1/4$ and distinct demands.
    \FORALL {$(i,j) \in [n]^2_0$}
        \STATE Suppose $\vec{s}$ and $\vec{s}'$ are separated by a facet $F$ normal to $\eb^i - \eb^j$.
    	\STATE Find the unique intersection point $\pb$ of $F$ with the line segment from $\vec{s}$ to $\vec{s'}$.
    	\STATE Super query at $\lceil \pb \rceil$.
    	\IF{the super query learns a facet containing $\lceil \pb \rceil$}
    	    \RETURN the hyperplane containing this facet.
    	\ENDIF
    \ENDFOR
\end{algorithmic}
\end{algorithm}

This suggests the following algorithm for finding a separating hyperplane, specified in Algorithm~\ref{alg:separating-hyperplane}. First we perform binary search between $\qb$ and $\qb'$ in order to find points $\vec{s}$ and $\vec{s}'$ that are $1/4$ apart (w.r.t. the $L_\infty$-norm). Then, for every possible pair of goods $(i,j) \in [n]_0^2$, we suppose that $\vec{s}$ and $\vec{s}'$ are separated by a facet that is normal to $\eb^i-\eb^j$ and perform a super query at prices $\lceil \pb \rceil$ obtained as described above. As soon as such a super query finds a facet containing $\lceil \pb \rceil$, we return the hyperplane containing this facet. Overall, finding a separating hyperplane between $\qb$ and $\qb'$ in this way costs $O(\log M + n^2 2^n n!)$ queries, as each super query costs $O(2^n n!)$. Moreover, our procedure is guaranteed to find a separating hyperplane, as it exhaustively checks all possible orientations for the separating facet $F$ between $\vec{s}$ and $\vec{s}'$.

\subsection{The Main Algorithm}\label{sec:positive-negative-main}
\begin{figure}
    \centering
    \includegraphics[scale=0.7]{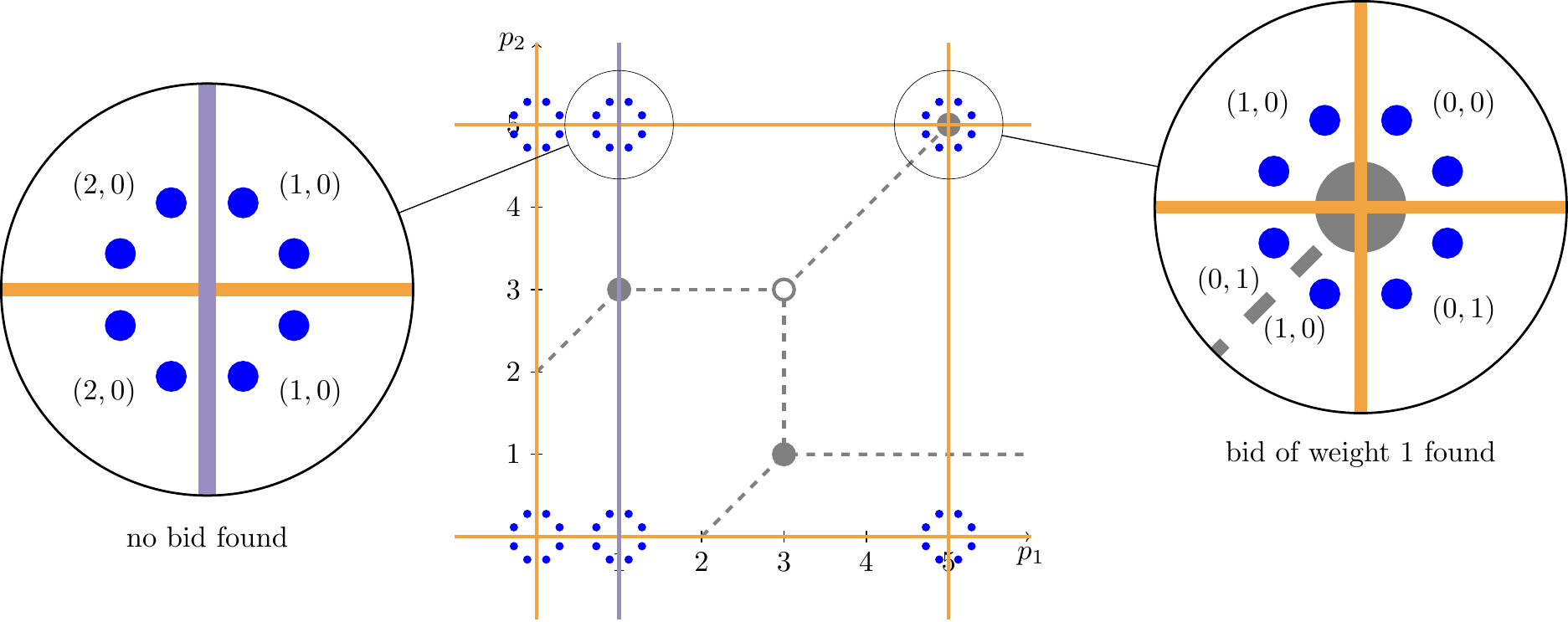}
    \caption{A snapshot of Algorithm~\ref{alg:pos-neg} learning the bid list consisting of one negative bid of weight $-1$ at $(3,3)$ and three positive bids of weight $1$ at $(1,3), (3,1)$ and $(5,5)$. The initial axis-aligned hyperplanes are drawn in orange. Moreover, the algorithm has learnt one additional vertical hyperplane $(x = 1)$ drawn in purple. The super queries made at the hyperplane intersections are represented by blue query points. Two super queries are highlighted to illustrate how the existence of a bid is determined.}
    \label{fig:pos-neg-superquery}
\end{figure}

Algorithm~\ref{alg:pos-neg} learns bid lists that may comprise positive and negative bids. The algorithm maintains a set of hyperplanes~$\mathcal{H}$ that it has learnt. We initialise~$\mathcal{H}$ with the axis-aligned hyperplanes $\eb^i = 0$ and $\eb^i = M$ for all $i \in [n]$. The algorithm also keeps track of the corresponding set of vertices $\mathcal{V}$ arising from intersections of hyperplanes in $\mathcal{H}$, the set of query points $\mathcal{Q} = \bigcup_{\vec{v} \in \mathcal{V}} \sq(\vec{v})$, as well as the set of polytopes $\mathcal{P}$ of the subdivision of $\Rn$ by the hyperplanes in~$\mathcal{H}$. Finally, $\hat{\bids}$ denotes the set of bids that the algorithm has learnt. Two query points $\qb, \qb' \in \mathcal{Q}$ are \emph{hyperplane witnesses} if they lie in the same polytope $P \in \mathcal{P}$ but have different demand. Figure~\ref{fig:pos-neg-superquery} shows a snapshot of Algorithm~\ref{alg:pos-neg} after learning a single hyperplane.

\begin{algorithm}[b!]
\caption{Learning Positive and Negative Bids}
\label{alg:pos-neg}
\begin{algorithmic}[1]
	\INITIALIZATION
	\STATE Let $\mathcal{H}$ be hyperplanes of the form $\eb^i \cdot \xb = 0$ and $\eb^i \cdot \xb = M$ for all $i \in [n]$.
	\STATE Update $\mathcal{V}, \mathcal{Q}$ and $\mathcal{P}$ according to $\mathcal{H}$.
    \STATE Super query at each $\vec{v} \in \mathcal{V}$ to check for bid at $\vec{v}$ and add each new-found bid to $\hat{\bids}$.
    \label{step:superquery-init}
    \item[\textbf{Main loop:}]
    \WHILE{$\exists$ hyperplane witnesses $\qb, \qb' \in \mathcal{Q}$}
        \STATE Use Algorithm~\ref{alg:separating-hyperplane} to find a hyperplane separating $\qb$ and $\qb'$ and add it to $\mathcal{H}$\label{step:hyperplane-finding}.
        \STATE Update $\mathcal{V}, \mathcal{Q}$ and $\mathcal{P}$ accordingly.
        \FORALL {new vertices $\vec{v}$ in $\mathcal{V}$}
            \STATE Super query at $\vec{v}$ to check for bid at $\vec{v}$ and add a newly found bid to $\hat{\bids}$.
            \label{step:superquery-loop}
        \ENDFOR
    \ENDWHILE
    \RETURN $\hat{\bids}$
\end{algorithmic}
\end{algorithm}

We now argue that Algorithm~\ref{alg:pos-neg} is well-defined and learns a bid list in $O(Bn^2)$ iterations by identifying all the hyperplanes containing a facet of the LIP. As each bid gives rise to at most $O(n^2)$ facets, the total number of such hyperplanes is $O(Bn^2)$. The algorithm learns a new hyperplane in each iteration, as hyperplane witnesses lie in the same polytope by definition, which implies that Step~\ref{step:hyperplane-finding} of the algorithm finds a new hyperplane that is not in~$\mathcal{H}$. Moreover, every bid must lie at the intersection of $n$ such hyperplanes, and we perform a super query at each intersection to check for the existence of a bid at that point. All that remains is to show that Algorithm~\ref{alg:pos-neg} does not terminate until all $O(Bn^2)$ hyperplanes containing facets of the LIP have been learnt, as this immediately implies that the algorithm identifies the locations and weights of all bids.

\begin{lemma}
\label{lemma:pos-neg-alg-correctness}
Algorithm~\ref{alg:pos-neg} learns all $O(Bn^2)$ hyperplanes containing a facet of the LIP.
\end{lemma}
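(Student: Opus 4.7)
The plan has two stages: a counting argument bounding the main loop's iteration count, and a structural argument showing that termination forces every LIP-supporting hyperplane into $\mathcal{H}$.

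For the counting stage, every hyperplane ever added by Algorithm~\ref{alg:separating-hyperplane} supports a facet of the LIP, since the subroutine searches over SS-admissible orientations $\eb^i - \eb^j$ for $(i,j) \in [n]_0^2$ and certifies each candidate via a super query. The number of distinct such hyperplanes is $O(Bn^2)$: each of the $B$ bids $\bid \in \bids$ induces $n+1$ demand regions $R_0, \ldots, R_n$ whose pairwise-separating facets account for at most $\binom{n+1}{2} = O(n^2)$ distinct supporting hyperplanes. Each iteration of the main loop strictly enlarges $\mathcal{H}$, because any pair of hyperplane witnesses $\qb, \qb'$ lies in a common polytope of $\mathcal{P}$ by definition and therefore cannot be separated by a hyperplane already in $\mathcal{H}$. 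Adding the $2n$ initialisation hyperplanes does not alter this bound, so the main loop terminates after at most $O(Bn^2)$ iterations.

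The structural stage is the main technical step. Suppose termination has occurred and, for contradiction, that some hyperplane $H^*$ containing an LIP facet $F$ is missing from $\mathcal{H}$. Because $H^* \notin \mathcal{H}$, the facet $F$ crosses the interior of some polytope $P \in \mathcal{P}$, so $P$ meets at least two distinct closed demand regions $R_\xb, R_\yb$ with $\xb \neq \yb$. By the strong-substitutes property these regions are convex (Section~\ref{sec:lip}), and $P$ is the convex hull of its vertices, so if every vertex of $P$ lay in a single closed demand region then $P$ would be contained in that region, contradicting its intersection with two distinct regions. Hence $P$ has vertices $\vec{v}, \vec{v}' \in \mathcal{V}$ in distinct demand regions, say $R_\xb$ and $R_\yb$. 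By Lemma~\ref{lemma:super-query} and Step~\ref{step:superquery-loop} executed at both vertices, the algorithm knows the demand throughout the open unit $L_\infty$-balls around $\vec{v}$ and $\vec{v}'$; within each ball we can therefore locate a query point in $\sq(\vec{v}) \cap P$ that demands $\xb$ and one in $\sq(\vec{v}') \cap P$ that demands $\yb$. This pair constitutes hyperplane witnesses in $P$, contradicting termination.

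Putting the two stages together yields the lemma: Algorithm~\ref{alg:pos-neg} halts after at most $O(Bn^2)$ iterations with $\mathcal{H}$ containing every hyperplane supporting an LIP facet. I expect the main obstacle to be the final step of the structural argument, namely verifying that query points in $\sq(\vec{v}) \cap P$ and $\sq(\vec{v}') \cap P$ with the required demands actually exist. This reduces to the observation that since $P$ is full-dimensional with $\vec{v}$ on its boundary, $P$ occupies a full-dimensional cone at $\vec{v}$ that contains at least one orthant of the super-query triangulation; the representative simplex points in that orthant lie in $P$, and by Lemma~\ref{lemma:super-query} their demand equals the demand at the side of the LIP on which $\vec{v}$'s local cone within $P$ sits.
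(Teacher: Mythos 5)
Your proof follows the same two-stage outline as the paper's (the iteration bound, then the claim that termination forces all hyperplanes into $\mathcal{H}$), and the iteration bound and the ``not all vertices of $P$ lie in one closed demand region'' step are both sound. However, there is a genuine gap in the final step of your structural argument, and you put your finger on exactly the right place when you flagged the ``main obstacle,'' but your sketched fix does not close it.

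You assert that because $\vec{v} \in R_\xb$ and $\vec{v}' \in R_\yb$ you can ``locate a query point in $\sq(\vec{v}) \cap P$ that demands $\xb$ and one in $\sq(\vec{v}') \cap P$ that demands $\yb$.'' This does not follow. A vertex $\vec{v}$ of the arrangement is typically marginal and lies in the closure of \emph{several} demand regions; which bundle is demanded by a given representative point of $\sq(\vec{v})$ depends on which simplex of the local triangulation that point sits in, and the simplices of $\sq(\vec{v})$ that happen to land inside $P$ need not include one whose demand is~$\xb$. Your final paragraph gestures at ``the side of the LIP on which $\vec{v}$'s local cone within $P$ sits,'' but there is no single such side: several LIP facets can pass through~$\vec{v}$ and cut through $P$'s local cone, so the phrase is not well-defined, and the claimed demand identity does not hold in general. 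Consequently, the two query points you exhibit could in principle both report the same bundle, in which case you have not produced a pair of hyperplane witnesses.

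The paper sidesteps this by reversing the quantifier: rather than trying to exhibit two query points with prescribed demands, it assumes for contradiction that \emph{all} super-query points of $\sq(\vec{v}) \cap P$, over all vertices $\vec{v}$ of $P$, report the same bundle $\xb$. Since those points are arbitrarily close to the vertices and lie in $P$, every vertex of $P$ belongs to the closed region $R_\xb$; by convexity of $R_\xb$ and the fact that $P$ is the convex hull of its vertices, $P \subseteq R_\xb$, so every non-marginal point of $P$ uniquely demands $\xb$. This contradicts the existence (which you also establish) of two non-marginal points $\pb, \pb' \in P$ with distinct demands. Hence some two super-query points inside $P$ disagree, and these form the witnesses. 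If you replace your ``locate a point demanding $\xb$ / $\yb$'' step with this contradiction, and keep your (correct) observation that $P$'s local cone at a vertex is a union of super-query simplices and therefore contains a representative point, your proof becomes complete and matches the paper's.
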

\begin{proof}
    It suffices to show that there always exists a pair of hyperplane witnesses if we have not learnt all hyperplanes. Suppose $F$ is a facet of the LIP that is not contained in any hyperplane in $\mathcal{H}$. Then $F$ separates two neighbouring demand regions. Moreover, by assumption there is a polytope $P \in \mathcal{P}$ that intersects both these regions, hence there exist two non-marginal points $\pb, \pb' \in P$ at which demand differs.

    Next, recall that we perform a super query at every vertex of the subdivision of $\Rn$ by the hyperplanes in $\mathcal{H}$. Hence, for every polytope $P \in \mathcal{P}$ we have a query point close to its vertices. Suppose all these query points in $P$ demand the same bundle $\xb$. This implies that~$\xb$ is also demanded at all vertices of $P$. By convexity of demand, $\xb$ is then uniquely demanded at \emph{any} non-marginal point of the polytope. But this contradicts the fact that demand differs at $\pb$ and $\pb'$. It follows that at least two of the query points close the vertices of $P$ have distinct demand.
\end{proof}

Theorem~\ref{thm:pos-neg-query-complexity} gives the overall query complexity of learning a bid list with Algorithm~\ref{alg:pos-neg}.

\begin{theorem}
\label{thm:pos-neg-query-complexity}
	Algorithm~\ref{alg:pos-neg} requires $O\left( Bn^2 \log M + 2^n n! \binom{Bn^2}{n} \right )$ queries to learn a bid list that may consist of positive and negative bids. For $n$ constant, this is $O\left (B \log M + B^n \right )$.
\end{theorem}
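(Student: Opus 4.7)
The plan is to bound separately (a) the cost of finding new hyperplanes and (b) the cost of super querying vertices, then sum the two contributions.

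First, I would use Lemma~\ref{lemma:pos-neg-alg-correctness} together with a counting argument to bound the number of iterations of the main loop. Each bid $\vec{b}^j \in \bids$ contributes facets of the form $R_i^j \cap R_k^j$ for pairs $(i,k) \in [n]^2_0$, of which there are $\binom{n+1}{2} = O(n^2)$, so the total number $K$ of hyperplanes containing a facet of the LIP is $O(Bn^2)$. Since each iteration adds exactly one new hyperplane to $\mathcal{H}$, the main loop runs at most $K$ times.

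Next, I would cost a single iteration. Algorithm~\ref{alg:separating-hyperplane} performs $O(\log M)$ demand queries to binary-search between witnesses $\qb, \qb'$ (whose $L_\infty$-distance is at most $M+1$) down to a separation of $1/4$, and then at most $(n+1)^2$ super queries — one per candidate pair $(i,j) \in [n]^2_0$ — each costing $O(2^n n!)$. Summing $O(\log M + n^2 \cdot 2^n n!)$ over the $O(Bn^2)$ iterations yields a total hyperplane-finding cost of $O\bigl(Bn^2 \log M + Bn^4 \cdot 2^n n!\bigr)$.

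Third, I would bound the aggregate cost of super querying vertices. The key observation is that each vertex is super queried \emph{at most once} during execution — either on line~\ref{step:superquery-init} during initialisation or on line~\ref{step:superquery-loop} when it first appears in $\mathcal{V}$. Since every vertex of the final arrangement is the intersection of at least $n$ of the $K$ learned hyperplanes, the arrangement has at most $\binom{K}{n} = O\bigl(\binom{Bn^2}{n}\bigr)$ vertices, giving a total super-query cost of $O\bigl(2^n n! \binom{Bn^2}{n}\bigr)$; the $O(2^n)$ corners of the initial box $[0,M]^n$ are absorbed here as well.

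Adding the two contributions, and absorbing the residual $Bn^4 \cdot 2^n n!$ term into $2^n n! \binom{Bn^2}{n}$ (a routine estimate, using $\binom{Bn^2}{n} \gtrsim (Bn^2)^n/n!$ whenever the parameters are not degenerately small), yields the claimed bound $O\bigl(Bn^2 \log M + 2^n n! \binom{Bn^2}{n}\bigr)$. For fixed $n$, $\binom{Bn^2}{n} = \Theta(B^n)$ and $n^2 = O(1)$, collapsing this to $O(B \log M + B^n)$. I expect the main obstacle to be the vertex-counting step: one must carefully justify that the bookkeeping around $\mathcal{V}$ ensures each vertex is super queried only once across \emph{all} iterations, and that the $\binom{K}{n}$ upper bound on the arrangement's vertex count is valid even when the learned hyperplanes are not in general position.
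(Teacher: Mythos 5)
Your proposal is correct and follows essentially the same decomposition as the paper: count $O(Bn^2)$ learned hyperplanes, bound the cost of Algorithm~\ref{alg:separating-hyperplane} per iteration by $O(\log M + n^2\, 2^n n!)$, bound the number of vertices (and hence super queries) by $O\bigl(\binom{Bn^2}{n}\bigr)$, and add. You are more explicit than the paper about absorbing the residual $Bn^4 \cdot 2^n n!$ term into $2^n n! \binom{Bn^2}{n}$, which the paper does silently, so that is a small improvement in rigour. The worry you flag about degenerate hyperplane arrangements is unfounded in the direction that matters: failure of general position only \emph{decreases} the number of vertices relative to $\binom{K}{n}$, since coincident intersections merge vertices and parallel families produce none, so $\binom{K}{n}$ remains a valid upper bound. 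Your other worry — that each vertex is super queried only once — is handled by the algorithm itself, which on Step~\ref{step:superquery-loop} iterates only over \emph{new} vertices, exactly as you suspected it must.
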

\begin{proof}
	Let $H, V$ and $Q$ be the number of hyperplanes, vertices and query points in the sets $\mathcal{H}, \mathcal{V}$ and $\mathcal{Q}$ at the conclusion of Algorithm~\ref{alg:pos-neg}. Note that Algorithm~\ref{alg:pos-neg} only makes demand queries in Steps~\ref{step:superquery-init},~\ref{step:hyperplane-finding} and~\ref{step:superquery-loop}. We first count the number of queries that arise from checking for existence of a bid at vertex locations (Steps~\ref{step:superquery-init} and \ref{step:superquery-loop}).
	$\mathcal{H}$ is initialised with $2n$ axis-aligned hyperplanes and learns an additional $O(Bn^2)$ hyperplanes (cf.~Lemma~\ref{lemma:pos-neg-alg-correctness}). Hence $H = O(Bn^2)$. As every vertex in~$\mathcal{V}$ arises as the intersection of $n$ distinct hyperplanes, we have $V = O\left(\binom{Bn^2}{n}\right)$. The algorithm performs a super query of cost $2^n n!$ at each vertex in $\mathcal{V}$, leading to a total cost of
	$
	\label{eq:bid-existence-term}
	O\left (2^n n!\binom{Bn^2}{n} \right ).
	$
    Next we count the number of queries required to find new hyperplanes (Step~\ref{step:hyperplane-finding}). Section~\ref{sec:hyperplane-finding} tells us that finding a single hyperplane costs $O(\log M + n^2 2^n n!)$ queries. As the algorithm learns $O(Bn^2)$ hyperplanes, the aggregate number of queries performed by Step~\ref{step:hyperplane-finding} is
    $
    \label{eq:hyperplane-finding-term}
        O\left( B n^2 \left (\log M + n^2 2^n n! \right) \right).
	$
	Summing these two cost terms for super queries and hyperplane finding
	gives the desired query complexity for Algorithm~\ref{alg:pos-neg}.
\end{proof}

\section{Lower Bounds}\label{sec:lowerbounds}
We describe lower bounds for the complexity of learning bid lists. First we show that learning $B$ positive bids requires $\Omega(B \log M)$ queries. Our other lower bounds apply in the setting where the bid list may comprise positive and negative bids and make use of a carefully constructed `island gadget' that consists of $2^n$ positive and $2^n$ negative bids of unit weight. For bid lists $\bids$ that need not be valid, the island gadget immediately implies that $\Omega \left( \left (\frac{M}{4} \right )^n \right)$ queries are required to learn $\bids$ (roughly, price-space has to be queried exhaustively). For the case with valid bid lists, we construct an adversarial game to obtain a lower bound of $\Omega\left (\left (({B-2^{n+1}})/{8n^2} \right )^n \right )$ on the query complexity. We see that $n$ must be held constant for the query complexity to be polynomial. In this regime, our lower bounds and the upper bounds from Section~\ref{sec:generalbids} imply a query complexity of $\Theta \left (B \log M + B^n \right )$ for constant~$n$.

\begin{theorem}\label{thm:simple-lower-bound}
	Any algorithm for learning bid lists requires $\Omega(B \log M) $ queries.
\end{theorem}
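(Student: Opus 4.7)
The plan is to construct a product-structured family of bid lists and run an adversary argument against it. I would take $n = 2$ and, setting $M' := \lfloor M/(4B) \rfloor$, consider the family $\mathcal{F} = \{ \bids_{\vec{k}} : \vec{k} \in \{0, \ldots, M'-1\}^B \}$, where $\bids_{\vec{k}}$ consists of $B$ positive unit-weight bids with the $i$-th bid placed at $(k_i + 4iM', 0) \in \Z^2$. The bids occupy pairwise disjoint ``slots'' of width $M'$ along the first coordinate, separated by gaps of width $3M'$, and every bid has $L_\infty$-magnitude at most $M$; all such bid lists are automatically valid since they contain only positive bids.

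The crux is a structural claim I would prove: for any non-marginal query $\vec{p} = (p_1, p_2)$, the response $D_{\bids_{\vec{k}}}(\vec{p})$ depends on at most one coordinate $k_{i_0}$ of $\vec{k}$. A short case analysis on the sign of $p_2$ reduces each single bid's demand to a threshold comparison: writing $a_i := k_i + 4iM'$, for $p_2 > 0$ the bid at $(a_i,0)$ demands good $0$ when $a_i \leq p_1$ and good $1$ when $a_i > p_1$, while for $p_2 < 0$ it demands good $1$ when $a_i > p_1 - p_2$ and good $2$ when $a_i < p_1 - p_2$. In either case the comparison is against a common threshold $\tau \in \{p_1, p_1 - p_2\}$ that depends only on $\vec{p}$; since the slots are pairwise disjoint and lie inside $[0, M]$, $\tau$ falls in at most one slot $i_0$, and the demands of every other bid are pinned down independently of the corresponding $k_j$.

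Granting this, I would let the adversary maintain a product subset $S_1 \times \cdots \times S_B \subseteq \{0, \ldots, M'-1\}^B$ of consistent parameter tuples. On each query, the adversary identifies the affected coordinate $i_0$ (if any) and returns the response consistent with the larger half of $S_{i_0}$; this shrinks $|S_{i_0}|$ by at most a factor of $2$ and leaves every other $|S_j|$ intact, while remaining consistent with at least one surviving $\vec{k}$. Since the algorithm must drive $\prod_i |S_i|$ from $(M')^B$ down to $1$, it must make at least $\log_2 (M')^B = B \log_2 \lfloor M/(4B)\rfloor = \Omega(B \log M)$ queries, in the natural regime where $\log B = o(\log M)$.

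The hardest part will be making the case analysis watertight at marginal prices --- in particular the boundary cases $p_2 = 0$, $\tau \notin [0, M]$, and thresholds coinciding with a slot endpoint --- and verifying that the adversary's preferred response is always realisable by some $\vec{k}$ that remains in the product set, so that the algorithm cannot profitably query at degenerate prices to bypass the single-coordinate bottleneck.
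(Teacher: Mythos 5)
Your proposal is correct and, importantly, it is more careful than the paper's own proof, which is a one-line sketch: the authors place $B$ positive unit bids at arbitrary integral points on the first coordinate axis and then invoke ``a standard decision tree argument.'' That sketch glosses over a genuine subtlety: a demand query at a price on that line returns a count in $\{0,\ldots,B\}$ (how many bids currently prefer good $1$ over rejecting), so each query can in principle carry $\Theta(\log B)$ bits of information, and the naive information-theoretic count only yields $\Omega\bigl(B\log(M/B)/\log B\bigr)$. Your disjoint-slot construction is exactly the extra idea needed to close that gap: by isolating each bid in its own slot of width $M'$, you force the threshold determined by any query to fall in at most one slot, so a query reveals at most one bit (or a three-way split at a marginal price, which changes only the base of the logarithm) about the single affected coordinate $k_{i_0}$, and the product-set adversary argument goes through cleanly to give $\Omega(B\log M')$. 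This is a genuinely tighter and more self-contained argument than what the paper provides, and it correctly identifies (and acknowledges) the regime $\log B = o(\log M)$ in which the bound reads $\Omega(B\log M)$, a restriction that the paper's construction also implicitly needs. Two small points worth tidying: index the slots from $i=0$ to $B-1$ (or shift by one) so the largest bid position $k_i + 4iM'$ actually stays $\le M$, and when handling the marginal boundary $p_2=0$ or $\tau$ equal to a slot endpoint, just note that a marginal query at worst splits $S_{i_0}$ three ways, which weakens the per-query factor from $2$ to $3$ but leaves the asymptotics unchanged.
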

\begin{proof}
Suppose an adversary places $B$ positive bids at $B$ integral points in the space $\{ (x_1, 0, \ldots, 0) \mid x_1 \in [M]_0 \})$. By a standard decision tree argument, any algorithm must make $\Omega(B \log M)$ queries to learn the location of the bids.
\end{proof}

\subsection{The Island Gadget}\label{sec:island-gadget}
We now introduce the \emph{island gadget}, which allows us to locally change the demand correspondence without affecting demand outside the convex hull of the gadget bids. This is illustrated in Figure~\ref{fig:gadget}.
Lemma~\ref{lemma:gadget-doesnt-leak} establishes that the gadget only influences demand locally. Let $\rho(\vec{x})$ denote the weight function that assigns positive weight~$1$ to bids with an even number of odd entries and negative weight~$-1$ otherwise.

\begin{definition}\label{def:gadget}
    The gadget $G$ at position $\bf{0}$ consists of the following $2^{n+1}$ positive and negative unit bids that sit on the vertices of two unit hyper-cubes. The bids on the first hyper-cube lie at $\bid \in \{0,1\}^n$ and have weight $\rho(\bid)$. The bids on the second hyper-cube lie at $\bid \in \{2,3\}^n$ and have weight $-\rho(\bid)$. In order to place $G$ at position $\vec{x}$, we add $\vec{x}$ to the position of each bid (without changing the weights).
\end{definition}

\begin{figure}
    \centering
    \subfigure[]{\includegraphics[scale=1]{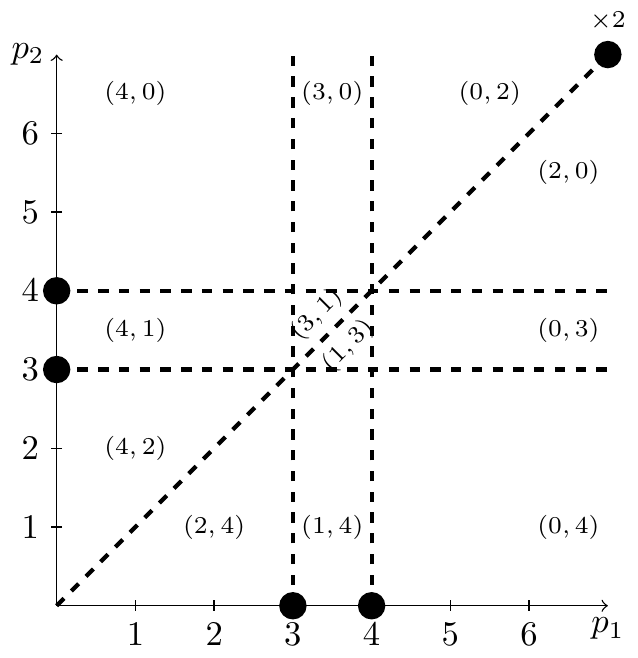}}
    \subfigure[]{\includegraphics[scale=1]{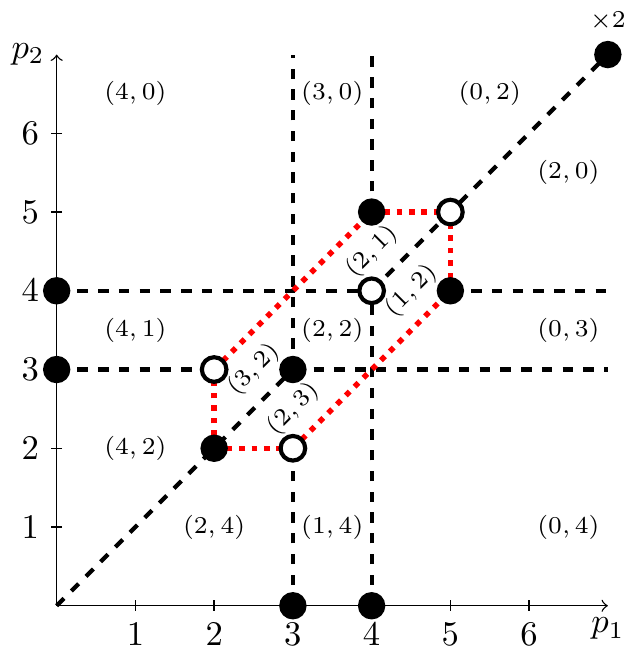}}
    \caption{A demand correspondence with (right) and without (left) an \emph{island gadget} placed at~$(2,2)$.
    The gadget consists of the eight bids contained in the convex region denoted by the red dotted lines. Note that demand differs only at prices \emph{inside} inside this region.}
    \label{fig:gadget}
\end{figure}

\begin{lemma}
\label{lemma:gadget-doesnt-leak}
    The bids of gadget $G$ placed at position~$\vec{x}$ demand nothing at prices $\vec{p} \not \in \xb + [3]_0^n$.
\end{lemma}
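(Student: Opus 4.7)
The plan is to exhibit, for any $\vec{p} \not\in \xb + [0,3]^n$ (so WLOG $\xb = \vec{0}$), a partition of the $2^{n+1}$ gadget bids into $2^n$ pairs such that each pair contributes $\vec{0}$ to $D_G(\vec{p})$. This immediately shows $D_G(\vec{p}) = \{\vec{0}\}$ at every non-marginal such price, and the case of marginal $\vec{p}$ outside the cube follows from the discrete-convex-hull definition of $D_G$, since an entire open neighbourhood of $\vec{p}$ also lies outside the cube. The hypothesis gives some coordinate $j$ with $p_j > 3$ or $p_j < 0$, and I will use different pairings in these two sub-cases.

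\textbf{Case $p_j > 3$.} Every gadget bid $\bid$ satisfies $b_j - p_j < 0$, so good $j$ is absent from every bid's argmax. Pair each $\bid$ with the bid obtained by flipping its $j$-th coordinate ($0 \leftrightarrow 1$ in the first cube, $2 \leftrightarrow 3$ in the second). The paired bids coincide on every coordinate $i \neq j$, so they assign identical values to every good $i \in [n] \setminus \{j\}$ and to the reject good $0$; since good $j$ is not demanded, they demand the same good(s) at $\vec{p}$. Flipping a single coordinate flips the parity of odd entries, so $\rho$ changes sign across the pair, making the two weights opposite within both cubes.

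\textbf{Case $p_j < 0$.} Every gadget bid has $b_j - p_j > 0$, so the reject good $0$ is never in any bid's argmax. Here I pair each $\ab \in \{0,1\}^n$ in the first cube with $\ab + 2\eb^{[n]} \in \{2,3\}^n$ in the second. Shifting by $2\eb^{[n]}$ increases every value $b_i - p_i$ (for $i \in [n]$) by the common constant $2$, so the argmax over $[n]$ is preserved; combined with the exclusion of good $0$, both paired bids demand the same good(s). Adding $2\eb^{[n]}$ preserves the parity of odd entries, so $\rho(\ab + 2\eb^{[n]}) = \rho(\ab)$, and by the sign convention of the second cube the paired weights $\rho(\ab)$ and $-\rho(\ab + 2\eb^{[n]})$ are opposite.

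The main subtlety is that no single pairing handles both sub-cases simultaneously: the within-cube coordinate-flip pairing natural for $p_j > 3$ breaks down when $p_j < 0$, since good $j$ is then strongly demanded and the flipped bids may split between good $j$ and other goods. Conversely, the across-cube shift pairing relies on ruling out good $0$, which fails when $p_j > 3$ and good $0$ becomes competitive. Recognising the two sub-cases as formally dual---exchanging the roles of good $j$ and good $0$---is the crux of the argument; once that duality is seen, the two parity identities for $\rho$ (flipping one bit flips sign; adding $2\eb^{[n]}$ preserves sign) complete the cancellation.
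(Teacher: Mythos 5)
Your proof is correct but takes a genuinely different and cleaner route than the paper's. The paper fixes a good $i$ and shows, via a case distinction on where $p_i$ sits relative to $[x_i, x_i+3]$ and on the differences $p_i - p_j$, that the positive and negative gadget bids demanding good $i$ are equinumerous; this leans on Lemma~\ref{lemma:parity-count} for the within-cube counts (Case II) and on the cross-cube shift $\bid \mapsto \bid + 2\eb^{[n]}$ only in the sub-case $p_i < x_i$ (Case I). You instead fix a single coordinate $j$ that violates the cube constraint and exhibit one explicit perfect matching of all $2^{n+1}$ gadget bids whose pairs share a demand set and carry opposite weight: a within-cube bit-flip in coordinate $j$ when $p_j > x_j + 3$ (good $j$ is dominated by the reject good), and the across-cube shift when $p_j < x_j$ (the reject good is dominated by good $j$). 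The duality you identify between the two sub-cases — exchanging the roles of good $j$ and the reject good $0$ — collapses the paper's case analysis entirely and yields a per-pair cancellation rather than a per-good count, which is arguably more illuminating. Your treatment of marginal prices via the open-neighbourhood argument and the discrete-convex-hull definition of $D_\bids$ is also correct and matches the paper's implicit reduction to non-marginal prices. Both arguments establish the same conclusion; yours does so with two lines of bijection instead of four sub-cases.
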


The proof of Lemma~\ref{lemma:gadget-doesnt-leak} makes use of the following technical result.

\begin{lemma}\label{lemma:parity-count}
    Let $\vec{c} \in \R^n$ be a constraint vector. If $c_i \geq 1$ for some $i \in [n]$, then the number of vectors with an even number of 1 entries in $\{\vec{x} \in \{ 0,1 \}^n \mid \vec{x} \leq \vec{c} \}$ is equal to the number of vectors with an odd number of 1 entries.
\end{lemma}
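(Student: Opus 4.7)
The plan is to give a bijective proof. Let $X := \{\vec{x} \in \{0,1\}^n \mid \vec{x} \leq \vec{c}\}$ and split $X = X_{\text{even}} \sqcup X_{\text{odd}}$ according to the parity of the number of $1$-entries. We are given an index $i^* \in [n]$ with $c_{i^*} \geq 1$, and I would use this distinguished coordinate to build an explicit parity-reversing involution on $X$.

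Specifically, define $\phi : X \to X$ by $\phi(\vec{x}) := \vec{x} \oplus \vec{e}^{i^*}$, i.e.\ flip the $i^*$-th coordinate from $0$ to $1$ or from $1$ to $0$ and leave all other coordinates unchanged. The first step is to check that $\phi$ is well-defined, i.e.\ that $\phi(\vec{x}) \in X$ whenever $\vec{x} \in X$. Since $\phi$ only changes coordinate~$i^*$, and the new value lies in $\{0,1\}$, the only constraint to verify is $(\phi(\vec{x}))_{i^*} \leq c_{i^*}$, which holds because both $0$ and $1$ are at most $c_{i^*}$ by the hypothesis $c_{i^*} \geq 1$. The map $\phi$ is clearly an involution, hence a bijection on $X$, and flipping a single coordinate changes the parity of the number of $1$-entries, so $\phi$ restricts to a bijection between $X_{\text{even}}$ and $X_{\text{odd}}$.

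Concluding, $|X_{\text{even}}| = |X_{\text{odd}}|$, which is exactly the claim. There is no real obstacle here; the only mildly subtle point is handling the case where some $c_j$ might be negative or lie in $[0,1)$, but this is harmless: such coordinates force $x_j$ to a unique value (or make $X$ empty, in which case the claim is vacuous), and crucially the distinguished free coordinate $i^*$ still supplies the parity-swap needed for the involution to work.
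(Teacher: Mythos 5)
Your proof is correct and complete. The paper proves this lemma simply by declaring ``Induction on $n$'' without spelling out the details, whereas you give an explicit parity-reversing involution $\phi(\vec{x}) = \vec{x} \oplus \vec{e}^{i^*}$ on the distinguished free coordinate. The two routes are genuinely different in style: an inductive proof would have to case-split on what the constraint $c_j$ allows for a coordinate $j \neq i^*$ (forces $x_j = 0$, forces $X = \emptyset$, or leaves $x_j$ free) and then invoke the hypothesis on the $(n-1)$-dimensional slices, whereas your bijection handles all of that implicitly in one stroke. The involution argument is arguably the cleaner of the two: it makes the equal split visible as a pairing rather than as a count, and it handles the degenerate cases (some $c_j < 1$, or $X = \emptyset$) without extra bookkeeping, as you correctly observe. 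You also take care to verify well-definedness of $\phi$ via the hypothesis $c_{i^*} \geq 1$, which is exactly where that hypothesis is needed.
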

\begin{proof}
    Induction on $n$.
\end{proof}

\begin{proof}[Proof of Lemma~\ref{lemma:gadget-doesnt-leak}]
    Recall that gadget $G$ at position $\xb$ consists of the bids $\bid \in \xb + \{0,1\}^n$ of weight $\rho(\bid - \xb)$ and the bids $\bid \in \xb+\{2,3\}^n$ of weight $-\rho(\bid - \xb)$. In order to prove that the bids of $G$ do not aggregately demand items of any goods at prices $\vec{p} \not \in \xb + [3]_0^n$, we show for every good $i$ and every such price that the number of negative and positive gadget bids demanding $i$ at $\vec{p}$ is the same.

    Fix a good $i \in [n]$ and prices $\vec{p} \not \in
		\xb + [0,3]^n$. Without loss of generality, we assume that $\vec{p}$ is non-marginal and $p_i \not = p_j$ for all $i,j \in [n]$. Recall that an integral bid $\bid$ uniquely demands good $i$ at $\vec{p}$ if and only if we have $b_i > p_i$ and $b_i - p_i > b_j - p_j$ for all $j \in [n] \setminus \{ i \}$. The last condition can be rewritten as $b_i - b_j > p_i - p_j$. Hence we can express the set of bids in the bottom and top cube that (uniquely) demand a positive or negative item of good $i$ as
    \begin{align}
        \label{eq:bottom-cube}
        \{ \bid \in \xb + \{0, 1\}^n &\mid b_i > p_i \text{ and } b_i - b_j > p_i - p_j,\ \forall j \in [n] \setminus \{i\} \} \\
        \label{eq:top-cube}
        \text{and }\{ \bid \in \xb + \{2, 3\}^n &\mid b_i > p_i \text{ and } b_i - b_j > p_i - p_j,\ \forall j \in [n] \setminus \{i\} \},
    \end{align}
    respectively. We show by case distinction on the possible value of $p_i$ that the number of positive and negative bids in \eqref{eq:bottom-cube} and \eqref{eq:top-cube} is equal.

    \textbf{Case I:}
    Suppose first that $p_i \not \in \xb + [3]_0^n$. If $p_i > x_i+3$, the two sets \eqref{eq:bottom-cube} and~\eqref{eq:top-cube} are empty and we are done. Hence assume that $p_i < x_i$. Then we can express \eqref{eq:bottom-cube} and~\eqref{eq:top-cube} as
    \begin{align*}
        &\xb + \{ \vec{v} \in \{0,1\}^n \mid v_i - v_j > p_i - p_j,\ \forall j \in [n] \setminus \{i\} \} \\
    \text{and } &\xb + 2\eb^{[n]} + \{ \vec{v} \in \{0,1\}^n \mid v_i - v_j > p_i - p_j,\ \forall j \in [n] \setminus \{i\} \}.
    \end{align*}

    We see that there is a one-to-one correspondence $\bid \rightarrow \bid + 2\eb^{[n]}$ between bids in \eqref{eq:bottom-cube} and \eqref{eq:top-cube} that preserves the number of odd entries in the bid vectors. A bid $\bid$ in the bottom cube is positive if and only if $\bid - \xb$ has an even number of odd entries and the reverse is true for the top cube. Hence the total number of positive and negative bids in \eqref{eq:bottom-cube} and \eqref{eq:top-cube} is equal.

    \textbf{Case II:}
    Suppose that $p_i \in [x_i,x_i+3]$. Then there exists a good $k \in [n] \setminus \{ i \}$ with $p_k < x_k$ or $p_k > x_k+3$. We proceed by further case distinction. Note that $b_i - b_j \in \{-1, 0, 1 \}$ for all gadget bids $\bid$.

    \textbf{Case II.1:}
    Suppose $p_i - p_j \geq 1$ for some $j \in [n] \setminus \{ i \}$. Then the sets \eqref{eq:bottom-cube} and \eqref{eq:top-cube} are empty, as the constraint $b_i - b_j > p_i - p_j$ is violated.

    \textbf{Case II.2:} Now suppose $p_i - p_j < 1$ for every $j \in [n] \setminus \{ i \}$ and we have a non-empty set $K$ of indices $k$ for which $0 < p_i - p_k < 1$.
    Then for all bids $\bid$ in \eqref{eq:bottom-cube} and \eqref{eq:top-cube} we have $b_i - b_k = 1$ for all $k \in K$. Hence $b_i$ and $b_k$ can only take one value.

    If $K = [n] \setminus \{ i \}$, this implies that \eqref{eq:bottom-cube} and \eqref{eq:top-cube} both contain a single bid of odd parity and we are done. If $K \subset [n] \setminus \{ i \}$, then for $j \not \in K$ we have $p_i - p_j < 0$ and the constraint $b_i - b_j > p_i - p_j$ holds vacuously, so $b_j$ can take any value. We apply Lemma~\ref{lemma:parity-count} to see that \eqref{eq:bottom-cube} and \eqref{eq:top-cube} each contain an equal number of positive and negative bids.

    \textbf{Case II.3:} Suppose $p_j > p_i$ for all $j \in [n] \setminus \{ i \}$. Then there exists a good $k \in [n] \setminus \{ i \}$ such that $p_k > x+3$. We consider the two subsets $\{b \in \eqref{eq:bottom-cube} \mid b_i = x_i \}$ and $\{b \in \eqref{eq:bottom-cube} \mid b_i = x_i+1 \}$ of \eqref{eq:bottom-cube} separately. If $\bid$ is a bid in the first subset, then $b_i = x_i$ and $b_i > p_i$ imply that $p_i - p_k < -1$, so there is no constraint on the value of $b_k$. If $\bid$ is a bid in the second subset, then there is no constraint on the value of $b_k$. Hence in both cases each subset is either empty or contains an equal number of positive and negative bids by Lemma~\ref{lemma:parity-count}.
\end{proof}

\subsection{A Lower Bound for Positive and Negative Bids}
\label{sec:pos-neg-lower-bound}
From now on, we assume that the unknown bid list may comprise positive and negative bids. We primarily consider the setting where the bid list is assumed to be valid and specify an adversarial game where the adversary can force the player to make $\Omega \left (\left (\frac{B-2^{n+1}}{8n^2} \right )^n \right )$ queries. Note that this yields a query complexity of $\Omega(B^n)$ if $n$ is constant. A similar but simpler adversarial game is then applied to obtain a lower bound for the setting where the bid list may be invalid. The resulting lower bounds for bid lists comprising positive and negative bids are stated in Theorem~\ref{thm:pos-neg-lower-bound}.

Fix a parameter $k \in \mathbb{N}$ and let $M = 4k$. The adversary positions the island gadget consisting of $2^{n+1}$ bids at exactly one of $k^n$ possible points $\vec{x}$  of the lattice $4[k-1]_0^n$. He also places \emph{boundary bids} of weight 1 on the boundary of the $M$-cube as follows. For every good $i \in [n]$, there are $M$ positive bids at all points $m\eb^i$ with $m \in [M-1]_0$, and for every pair of goods $(i,j) \in [n]^2$ with $i \not = j$, there are $M$ positive bids at all points $m\eb^i + M\eb^j$ with $m \in [M-1]_0$. Hence, in total, the adversary creates a bid list with $B = 2^{n+1} + 4k \left (n + \binom{n}{2} \right )$ positive and negative bids. The player wishes to identify where the adversary placed the gadget. Lemma~\ref{lemma:gadget-doesnt-leak} shows that placing the gadget at $\vec{x}$ only influences demand at prices inside the cube $\xb + [3]_0^n$. Hence the player must make queries inside at least $k^n-1$ cubes to determine where the gadget was placed. Lemma~\ref{lemma:gadget-is-covered} shows that the bid list created by the adversary is valid. This leads to the lower bounds stated in Theorem~\ref{thm:pos-neg-lower-bound}.

\begin{lemma}\label{lemma:gadget-is-covered}
    The bids placed by the adversary are valid.
\end{lemma}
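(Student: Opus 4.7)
The plan is to verify the criterion of Theorem~\ref{thm:valid-bid-list}: for every $\pb \in \Rn$ and every pair $(i, i') \in [n]_0^2$, the sum of weights of bids indifferent between $i$ and $i'$ at $\pb$ must be non-negative. Since all boundary bids carry weight $+1$, they contribute non-negatively to every such sum, so the real task is to absorb the potentially negative contributions from the gadget using positive boundary bids. I would split the argument into two cases depending on whether $(i, i')$ involves the reject good.

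In the case $i' = 0$ (and $i \in [n]$), a bid $\bid$ is indifferent between $i$ and reject at $\pb$ iff $b_i = p_i$ and $\bid \leq \pb$, so $p_i$ must be integral, and only gadget bids with $b_i \in \{x_i, x_i+1, x_i+2, x_i+3\}$ can contribute. For each such value of $p_i$, I would argue that the sum of gadget weights factorises over the coordinates $\ell \neq i$: any coordinate admitting both $\epsilon_\ell \in \{0, 1\}$ contributes a factor of $1 + (-1) = 0$ (via Lemma~\ref{lemma:binom-identity}), while a coordinate forced to a unique value contributes $(-1)^{\epsilon_\ell}$. The gadget sum thus lies in $\{-1, 0, +1\}$. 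Whenever it equals $-1$, the boundary bid $p_i \eb^i$ (which is in the list since $p_i \leq x_i + 3 \leq M-1$) is itself indifferent between $i$ and reject at $\pb$: its $i$-th entry equals $p_i$ and all other entries are $0 \leq p_j$. Its weight $+1$ offsets the deficit.

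In the case $(i, i') \in [n]^2$, a gadget bid is indifferent iff $b_i - b_{i'} = p_i - p_{i'}$ and $b_i - p_i \geq b_\ell - p_\ell$ for all other $\ell \in [n]_0$. The first constraint pins down the $(b_i, b_{i'})$ pair to at most three subcases within each sub-cube of the gadget, and a parallel parity argument over the remaining coordinates shows the gadget contribution is again in $\{-1, 0, +1\}$. In each $-1$ subcase, I would exhibit a compensating boundary bid of the form $m\eb^i + M\eb^{i'}$ or $m\eb^{i'} + M\eb^i$ with a suitable $m \in [M-1]_0$ that is itself indifferent between $i$ and $i'$ at $\pb$ (matching the required $b_i - b_{i'}$ via the choice of $m$) and thereby contributes $+1$.

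The hard part will be the second case, where the compensating boundary bid sits on a far edge of the $M$-cube rather than adjacent to the gadget. Confirming that the integer $m$ determined by the indifference condition always lies in $[M-1]_0$ will rely on the spatial separation between the gadget and the far edges, guaranteed by the placement $\xb \in 4[k-1]_0^n$ so that $M - x_i \geq 4$ and $M - x_{i'} \geq 4$.
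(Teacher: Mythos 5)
Your proposal follows the same high-level plan as the paper's own proof: verify the criterion of Theorem~\ref{thm:valid-bid-list}, split into the two cases depending on whether the reject good is involved, bound the gadget's contribution to each indifference sum in $\{-1,0,+1\}$ by a parity argument over the free coordinates, and exhibit a compensating boundary bid of weight $+1$ in the $-1$ case. Your treatment of the $i'=0$ case is essentially the paper's Lemma~\ref{lemma:hod-count} (the cancellation of a free coordinate should be attributed to Lemma~\ref{lemma:parity-count}, not Lemma~\ref{lemma:binom-identity}, but that is cosmetic), and your observation that the gadget placement $\xb \in 4[k-1]_0^n$ guarantees the compensating boundary bid exists on the $M$-cube is the same spatial-separation check the paper relies on.

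The gap is in the case $(i,i') \in [n]^2$. You propose running a parity argument over the remaining coordinates \emph{within each sub-cube} and then asserting the total gadget contribution lies in $\{-1,0,+1\}$. But a parity argument applied separately to the bottom cube $\xb + \{0,1\}^n$ and the top cube $\xb + \{2,3\}^n$ only bounds each sub-cube's contribution individually; if both sub-cubes were to contribute $-1$, the total would be $-2$, and a single weight-$+1$ boundary bid would not rescue validity. Ruling this out requires an extra structural observation that does not follow from parity alone: if a bottom-cube bid $\bid$ is indifferent between $i$ and $i'$ at $\pb$, then so is $\bid + \vec{2}$ in the top cube, and these two bids have opposite weights. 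This injection of the bottom indifference set into the top indifference set (with sign flip) is what the paper uses, together with a per-sub-cube bound in the spirit of Lemma~\ref{lemma:flange-count}, to cap the total gadget contribution at $-1$. Without that pairing step your argument does not close, so as written the $(i,i')$ case has a genuine hole even though the intended conclusion is correct.

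A smaller point: the paper explicitly invokes a result from \cite{BGKL19} to reduce the check to the finitely many prices $\pb \in \xb + \{0,1\}^n \cup \{2,3\}^n$. Your argument quantifies over all $\pb \in \Rn$; this can be made rigorous by noting that prices outside these lattice cells leave the relevant indifference set empty or unchanged, but you should say so.
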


Recall that $\bids^G$ denotes the bids of the island gadget. In order to prove Lemma~\ref{lemma:gadget-is-covered}, we define the subset $\bids_i^G$ of $\bids^G$ that contains all bids indifferent between a good $i$ and the reject good at some fixed price vector $\pb$. The subset of bids indifferent between two non-reject goods $i$ and $j$ is defined similarly, albeit separately for the `lower' and the `upper' cube of the gadget. Specifically, we define
\begin{equation}\label{eq:gadget-hod}
    \bids^G_i \coloneqq \{\bid \in \{0,1\}^n \cup \{2,3\}^n \mid b_i = p_i \text{ and } b_k \leq p_k, \ \forall k \in [n]\}
\end{equation}
as well as
\begin{align}\label{eq:gadget-flange}
    \bids^L_{ij} &\coloneqq \{\bid \in \{0,1\}^n \mid b_i \geq p_i, b_j = b_i - p_i + p_j \text{ and } b_k \leq b_i - p_i + p_k \ \forall k \not \in \{i,j\} \} \\
    \bids^U_{ij} &\coloneqq \{\bid \in \{2,3\}^n \mid b_i \geq p_i, b_j = b_i - p_i + p_j \text{ and } b_k \leq b_i - p_i + p_k \ \forall k \not \in \{i,j\} \}.
\end{align}

\begin{lemma}\label{lemma:hod-count}
    Let $n \geq 3$ and $\pb \in \{0,1\}^n \cup \{2,3\}^n$. The sum of the weights of the gadget bids indifferent between $i$ and the reject good $0$ $\pb$ is $-1$, $0$ or $1$.
\end{lemma}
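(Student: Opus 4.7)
The plan is to reduce the claim to a case analysis based on which of the two hypercubes contains $\pb$, and then use the standard identity $\sum_{T \subseteq K}(-1)^{|T|} = (1-1)^{|K|}$, which is $0$ when $K \neq \emptyset$ and $1$ when $K = \emptyset$.

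First I would observe that the hypothesis $b_i = p_i$ in \eqref{eq:gadget-hod}, together with the fact that bids in the lower cube have entries in $\{0,1\}$ while bids in the upper cube have entries in $\{2,3\}$, forces $\bids^G_i$ to consist \emph{entirely} of bids from whichever cube contains $\pb$. This immediately splits the proof into two symmetric cases, with the upper-cube case obtained from the lower-cube case by shifting coordinates by $2\eb^{[n]}$ and negating weights (by Definition~\ref{def:gadget}).

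Next I would handle the case $\pb \in \{0,1\}^n$ in detail. Let $S \coloneqq \{k \in [n] : p_k = 1\}$. The constraint $b_k \leq p_k$ forces $b_k = 0$ whenever $k \notin S$, while coordinates $k \in S \setminus \{i\}$ are free to take values in $\{0,1\}$; the coordinate $b_i$ is pinned at $p_i$. Thus $\bids^G_i$ is in bijection with subsets $T \subseteq S \setminus \{i\}$ that record which of these free coordinates equal $1$. Since $\rho(\bid) = (-1)^{(\text{number of odd entries of }\bid)}$, the sum of weights becomes
\begin{equation*}
    \sum_{\bid \in \bids^G_i} w(\bid) \;=\; (-1)^{[p_i = 1]} \sum_{T \subseteq S \setminus \{i\}} (-1)^{|T|},
\end{equation*}
where the leading sign accounts for whether the fixed coordinate $b_i$ contributes an odd entry. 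By the binomial identity, this inner sum equals $0$ whenever $S \setminus \{i\} \neq \emptyset$, and equals $1$ otherwise. Hence the total is $0$ except in the degenerate corner cases where $S \setminus \{i\}$ is empty, in which case it is $\pm 1$. A quick enumeration of those corners (namely $\pb = \bm{0}$, giving $+1$; and $\pb = \eb^i$, giving $-1$) confirms the bound.

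Finally, the case $\pb \in \{2,3\}^n$ is handled by the change of variables $\bid \mapsto \bid - 2\eb^{[n]}$, which maps the upper cube bijectively onto the lower cube and preserves parity of odd-entry counts up to a global sign. Because upper-cube bids carry weight $-\rho(\bid)$, the entire calculation repeats with an overall minus sign, again producing a value in $\{-1,0,+1\}$. The main potential obstacle is bookkeeping: one must track carefully the effect of fixing $b_i = p_i$ on the parity count (whence the factor $(-1)^{[p_i = 1]}$), and not over-count by forgetting that the coordinate $i$ itself is excluded from the free subset. Once this is done cleanly, the binomial collapse does all the work, and the hypothesis $n \geq 3$ is used only to ensure that the ambient geometry in the subsequent lemma is nondegenerate (the arithmetic here is valid for every $n \geq 1$).
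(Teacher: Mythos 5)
Your proof is correct and follows essentially the same route as the paper: split by which cube contains $\pb$, observe that exactly the coordinates $k \neq i$ with $p_k$ at its maximum value are free, and collapse the weighted sum via the alternating binomial identity (the paper invokes its Lemma~\ref{lemma:parity-count}, which is the same fact). One small point in your favour: you are more careful with the degenerate corners than the paper is. The paper's proof asserts that for $\pb \in \{0,1\}^n$ the sum is $0$ or $1$ and that when $p_k=0$ for all $k\neq i$ the unique bid is $\vec 0$ with positive weight; it overlooks the subcase $\pb = \eb^i$, where the unique bid is $\eb^i$ with weight $\rho(\eb^i)=-1$. Your factor $(-1)^{[p_i=1]}$ handles this cleanly and yields the full range $\{-1,0,1\}$ already on the lower cube, which is consistent with the lemma as stated and does not affect its downstream use (Lemma~\ref{lemma:gadget-is-covered} only needs the sum to be at least $-1$).
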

\begin{proof}
    Suppose first that $\pb \in \{0,1\}^n$. We show that the sum of the weights of bids in $\bids^G_i$ is $0$ or $1$. Recalling the definition of $\bids^G_i$ from \eqref{eq:gadget-hod}, note that if $p_k = 0$ for all $k \not = i$, then $\bid = \vec{0}$ is the only bid in this set and its weight is positive. If we have $p_k = 1$ for some $k \not = i$, then the set contains an equal number of positive and negative bids by Lemma~\ref{lemma:parity-count}. Similarly, we can show that the sum of the weights of bids in $\bids^G_i$ is $-1$ or $0$ when $\pb \in \{2,3\}^n$.
\end{proof}

\begin{lemma}\label{lemma:flange-count}
    Let $n \geq 3$ and $\pb \in \{0,1\}^n \cup \{2,3\}^n$. If $\bids^L_{ij}$ has more than one bid, the sum of the weights of the bids indifferent between $i$ and $j$ is $0$ or $1$. Similarly, if $\bids^U_{ij}$ has more than one bid, the sum of the respective bid weights is $-1$ or $0$.
\end{lemma}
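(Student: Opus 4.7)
My approach parallels the proof of Lemma~\ref{lemma:hod-count}, reducing the weighted sums to parity counts handled by Lemma~\ref{lemma:parity-count}. I first handle $\bids^L_{ij}$. Observe that if $\pb \in \{2,3\}^n$, then $\bids^L_{ij}$ is empty since $b_i \geq p_i \geq 2$ is incompatible with $\bid \in \{0,1\}^n$, so we may assume $\pb \in \{0,1\}^n$. I parametrize bids by the common excess $d := b_i - p_i = b_j - p_j \geq 0$, so that the defining conditions become $b_i = p_i + d$, $b_j = p_j + d$ (both in $\{0,1\}$), and $b_k \leq p_k + d$ for $k \notin \{i,j\}$. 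The feasible values of $d$ are then $\{0,1\}$ when $p_i = p_j = 0$, and $\{0\}$ otherwise.

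For each feasible $d$, with $b_i$ and $b_j$ pinned down, I sum the weight $\rho(\bid) = (-1)^{|\{k : b_k = 1\}|}$ over the remaining coordinates $b_k$. When $d \geq 1$, every such $b_k$ is free in $\{0,1\}$ because $p_k + d \geq 1$, and Lemma~\ref{lemma:parity-count} yields a zero sum provided $n \geq 3$. When $d = 0$, the free coordinates are exactly those with $p_k = 1$, and the same lemma gives zero unless $p_k = 0$ for all $k \notin \{i,j\}$, in which case a single bid of weight $(-1)^{p_i + p_j}$ remains. Combining these contributions, the total weight of $\bids^L_{ij}$ is $+1$ exactly when $\pb = \vec{0}$ (where $d = 0$ supplies weight $+1$ and $d = 1$ supplies $2^{n-2}$ further bids summing to zero, so $|\bids^L_{ij}| > 1$), and $0$ in every other case with $|\bids^L_{ij}| > 1$. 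The degenerate single-bid cases $\pb \in \{\eb^i, \eb^j, \eb^i + \eb^j\}$, which would carry weight $\pm 1$, all satisfy $|\bids^L_{ij}| = 1$ and are excluded by the hypothesis.

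For $\bids^U_{ij}$, the case $\pb \in \{2,3\}^n$ reduces to $\bids^L_{ij}$ via the bijection $\bid \mapsto \bid - 2\eb^{[n]}$, which shifts the price to $\pb - 2\eb^{[n]} \in \{0,1\}^n$ and negates each weight (upper-cube weights are $-\rho$), producing a sum of $0$ or $-1$. The case $\pb \in \{0,1\}^n$ needs a parallel direct analysis with $\bid \in \{2,3\}^n$: the feasible $d$'s are $\{2,3\}$ when $p_i = p_j = 0$, $\{1,2\}$ when $p_i = p_j = 1$, and $\{2\}$ when $p_i \neq p_j$; the same parity-counting argument collapses everything except the single-bid contributions at $\pb = \vec{0}$ (where $d = 2$ gives weight $-1$ and $d = 3$ gives a zero-sum family of $2^{n-2}$ bids) and at $\pb = \eb^{[n]}$ (where $d = 1$ gives $-1$ and $d = 2$ gives a zero-sum family of $2^{n-2}$ bids). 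The main obstacle is bookkeeping: enumerating the feasible $d$'s as a function of $(p_i, p_j)$ and verifying in each surviving subcase that the single-bid contribution of weight $\pm 1$ is accompanied by a zero-sum family large enough to push $|\bids^L_{ij}|$ or $|\bids^U_{ij}|$ strictly above one, so that the lemma's hypothesis holds precisely in the cases where the sum is nonzero.
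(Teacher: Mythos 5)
Your approach and the paper's coincide in substance: both reduce the weighted sum over a slice of bids (once $b_i$ and $b_j$ are pinned down) to a parity count handled by Lemma~\ref{lemma:parity-count}. The paper conditions on $p_i \in \{0,1\}$ and then on $b_i$ (its sets $S$ and $T$ are exactly your slices $d=0$ and $d=1$), whereas you parametrize directly by the common excess $d = b_i - p_i$; these are the same argument organised differently. Where you add genuine value is in the treatment of $\bids^U_{ij}$. The paper dismisses it with ``the proof is identical,'' but that is not quite accurate: while $\bids^L_{ij}$ is empty whenever $\pb \in \{2,3\}^n$, the set $\bids^U_{ij}$ can be non-empty for $\pb$ in \emph{either} cube, so there really are two cases. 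Your $\bid \mapsto \bid - 2\eb^{[n]}$ reduction for $\pb \in \{2,3\}^n$ (which negates each weight and preserves the indifference conditions) and your separate direct analysis for $\pb \in \{0,1\}^n$ make this explicit, which is a worthwhile clarification.

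The proof is nonetheless incomplete in a place you yourself flag: the ``bookkeeping'' you defer is not optional. For $\bids^L_{ij}$ you enumerated the degenerate single-bid prices $\pb \in \{\eb^i, \eb^j, \eb^i + \eb^j\}$ and checked they all have $|\bids^L_{ij}| = 1$. You do not do the analogous enumeration for $\bids^U_{ij}$ with $\pb \in \{0,1\}^n$, and there the check is not just a formality. At $\pb = \eb^i$ (or $\eb^j$), the only feasible value is $d = 2$, forcing $b_i, b_j$ with one of them equal to $3$ and all other $b_k = 2$ (since $p_k = 0$ makes $b_k = 3$ infeasible). That bid has exactly one odd entry, hence $\rho = -1$ and upper-cube weight $-\rho = +1$, which lies \emph{outside} $\{-1,0\}$. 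The lemma survives only because $|\bids^U_{ij}| = 1$ at those prices, which must be checked and stated; your current phrasing gives only the two prices $\vec{0}$ and $\eb^{[n]}$ where the hypothesis does hold and the sum is $-1$, but omits the prices where a nonzero-weight singleton appears and the hypothesis must be seen to fail. Similarly, when $p_i = p_j = 1$ you list $d \in \{1,2\}$ as feasible, but the $b_k$-constraints kill the $d = 1$ slice unless $p_k = 1$ for every $k \notin \{i,j\}$; this should be noted, since otherwise the $d=1$ singleton at $\pb = \eb^i + \eb^j$ would incorrectly appear to contribute. Filling in this enumeration would complete the argument; without it the proposal stops just short.
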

\begin{proof}
    We prove the claim for $\bids^L_{ij}$; the proof for $\bids^U_{ij}$ is identical. Note that in order for $\bids^L_{ij}$ to contain any bids, we must have $\pb \in \{0,1\}^n$.
    Suppose $p_i = 1$. Then for any bid $\bid \in \bids^L_{ij}$, the values of $b_i$ and $b_j$ are constrained to $b_i=1$ and $b_j = p_j$. Moreover, the values of $b_k$ are constrained by $b_k \leq p_k$. Hence for $\bids^L_{ij}$ to contain at least two bids, we must have $p_k = 1$ for some $k \not \in \{i,j\}$, allowing $b_k$ to take values~0 and 1. Lemma~\ref{lemma:parity-count} implies that $\bids^L_{ij}$ contains an equal number of positive and negative bids.
    Now suppose $p_i = 0$. Then $b_i$ can take value 0 or 1 and we consider the two subsets of $\bids^L_{ij}$ where $b_i$ takes each value separately. Let
    $S \coloneqq \{ \bid \in \bids^L_{ij} \mid b_i = 0 \}$
    and
    $T \coloneqq \{ \bid \in \bids^L_{ij} \mid b_i = 1 \}$.
    By definition, $\bid$ is a bid in $S$ if $b_i=0$, $b_j = p_j$ and $b_k \leq p_k$. By the same argument as above, $S$ either contains a single positive bid at $b_j = p_j$ and $b_k = 0$ for all $k \not = i$, or it contains an equal number of positive and negative bids.
    
    Finally, note that $\bid$ is a bid in $T$ if $b_i = 1$, $b_j = 1 + p_j$ and $b_k \leq 1+ p_k$ for all $k \in [n]$. The last constraint is vacuous as it holds for all possible values of $b_k$ and $p_k$. Hence $T$ is empty if $p_j = 1$ and contains an equal number of positive and negative bids otherwise. Together, these properties of $S$ and $T$ imply the claim.
\end{proof}

We are now ready to prove Lemma~\ref{lemma:gadget-is-covered}.

\begin{proof}[Proof of Lemma~\ref{lemma:gadget-is-covered}]
    Suppose that gadget $G$ is placed at $\vec{x}$. We apply Theorem~\ref{thm:valid-bid-list} to prove the claim. Moreover, we can use a result given by Lemma B.10 in \cite{BGKL19} to restrict ourselves to checking the validity criterion in Theorem~\ref{thm:valid-bid-list} only at a finite number of price points (see \cite{BGKL19}, Appendix B for details). In our case, it suffices to verify the conditions for each vertex $\vec{p} \in \xb + \{ 0, 1 \}^n$ and $\vec{p} \in \xb + \{2, 3 \}^n$.

    Fix $\pb \in \xb + \{0,1\}^n \cup \{2,3\}^n$ and distinct goods $i,j \in [n]_0$. We consider first the case where $i=0$ and $j \in [n]$. By Lemma~\ref{lemma:hod-count}, the sum of weights of the gadget bids indifferent between $0$ and $j$ is at least $-1$. Secondly, by construction of the adversarial game, there is a gadget bid at $p_i \eb^i$ and this bid is indifferent between $i$ and the reject good. Hence the total sum of weights of the bids in the game indifferent between $i$ and rejecting is at least~$0$.

    Now let $i,j$ be distinct (non-reject) goods in $[n]$. Note that $\bid \in \{0,1\}^n$ is indifferent between $i$ and $j$ at $\pb$, then $\bid' = \bid + \vec{2} \in \xb + \{2,3\}^n$ is also indifferent between $i$ and $j$. Moreover, $\bid$ and $\bid'$ have opposite weights. This fact, together with Lemma~\ref{lemma:flange-count}, implies that the sum of the weights of the gadget bids is at least $-1$. Again, note that for any $\pb$ there is a boundary bid that is indifferent between $i$ and $j$. Indeed, recall that a bid is indifferent between $i$ and $j$ if and only if $b_i - b_j = p_i - p_j$, $b_i \geq p_i$ and $b_k - p_k \leq b_i - p_i$. If $p_i \geq p_j$, then the bid $M \eb^i + (M - (p_i - p_j))\eb^j$ satisfies these conditions. Similarly, if $p_i < p_j$, then the bid $(M-(p_i - p_j))\eb^i - M\eb^j$ satisfies this condition. As this bid is present by construction of the adversarial game, we see that the total sum of weights of all bids that are indifferent between $i$ and $j$ in the game is at least $0$ in all cases.
\end{proof}

\begin{theorem}
\label{thm:pos-neg-lower-bound}
    Let $\bids$ be a bid list that may comprise positive and negative bids. If $\bids$ is valid, any algorithm requires $\Omega \left (\left (\frac{B-2^{n+1}}{8n^2} \right )^n \right )$ queries to learn $\bids$. If $\bids$ is allowed to be invalid, $\Omega \left( (\frac{M}{4} )^n \right)$ queries are required.
\end{theorem}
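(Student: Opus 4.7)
The plan is to run an adversarial game on the $k^n$ candidate gadget positions in the lattice $4[k-1]_0^n$, using the locality of Lemma~\ref{lemma:gadget-doesnt-leak} to argue that each query eliminates at most one candidate position. The key geometric observation is that the cubes $C_{\vec{x}} := \vec{x} + [3]_0^n$ for distinct lattice points $\vec{x}$ are pairwise disjoint, since they have side length~$3$ and are spaced $4$~apart in each coordinate. Combined with Lemma~\ref{lemma:gadget-doesnt-leak}, this means that for any two placements $\vec{x}, \vec{x}'$, the demand correspondences induced by the bid lists $\bids_{\vec{x}}$ and $\bids_{\vec{x}'}$ (boundary bids plus gadget at $\vec{x}$ or $\vec{x}'$) agree on the complement of $C_{\vec{x}} \cup C_{\vec{x}'}$.

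The adversary maintains a set $S$ of candidate placements, initialised to all $k^n$ positions. On each query at $\vec{p}$, if $\vec{p}$ lies outside every cube, the adversary returns the bundle demanded under the boundary bids alone (the empty bundle in the invalid case), which is consistent with every $\vec{x}' \in S$; no candidate is eliminated. If $\vec{p} \in C_{\vec{x}}$, the adversary returns the same ``no-gadget-at-$\vec{x}$'' bundle. For every $\vec{x}' \in S \setminus \{\vec{x}\}$ we have $\vec{p} \notin C_{\vec{x}'}$, so this bundle lies in $D_{\bids_{\vec{x}'}}(\vec{p})$ by Lemma~\ref{lemma:gadget-doesnt-leak}; at worst, only $\vec{x}$ is eliminated from $S$. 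Hence each query reduces $|S|$ by at most one, and determining the bid list (which requires $|S|=1$) takes at least $k^n - 1$ queries.

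Translating to parameters: for the invalid case we drop the boundary bids entirely, and $M = 4k$ yields the $\Omega((M/4)^n)$ bound. For the valid case, the boundary bids ensure validity by Lemma~\ref{lemma:gadget-is-covered}, and the total bid count is
\[
B \;=\; 2^{n+1} + 4k\bigl(n + \tbinom{n}{2}\bigr) \;=\; 2^{n+1} + 2kn(n+1),
\]
so $k = (B - 2^{n+1})/(2n(n+1)) \geq (B - 2^{n+1})/(8n^2)$, using that $n+1 \leq 4n$ for $n \geq 1$. The lower bound therefore becomes $\Omega\bigl(((B - 2^{n+1})/(8n^2))^n\bigr)$.

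The main obstacle will be justifying that the adversary's ``no-gadget-here'' response remains simultaneously consistent with every surviving candidate, especially at non-marginal prices where the demand oracle has no freedom and $D_{\bids_{\vec{x}'}}(\vec{p})$ is a singleton. I would handle this by observing that for any query point $\vec{p} \notin C_{\vec{x}'}$, Lemma~\ref{lemma:gadget-doesnt-leak} forces $D_{\bids_{\vec{x}'}}(\vec{p})$ to coincide exactly with the demand induced by the non-gadget part of the bid list, so the returned bundle is valid for every $\vec{x}' \in S$ whose cube does not contain $\vec{p}$. A small secondary point is checking that the adversary's answer is truly legitimate when $\vec{p}$ is marginal: here the adversarial oracle may freely pick any bundle in the demand correspondence, and the chosen ``no-gadget'' bundle lies in the relevant correspondence for every placement $\vec{x}' \in S \setminus \{\vec{x}\}$ by the same locality argument.
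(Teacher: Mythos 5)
Your proposal follows the same approach as the paper: the same lattice of $k^n$ candidate gadget positions, the same appeal to Lemma~\ref{lemma:gadget-doesnt-leak} for locality and Lemma~\ref{lemma:gadget-is-covered} for validity, the same arithmetic relating $k$ to $B$, and the same adversarial decision-tree argument (which the paper invokes as ``standard'' and you spell out explicitly, including the disjointness of the cubes and the marginal-price subtlety). Correct and essentially identical to the paper's proof.
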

\begin{proof}
Consider first the case where $\bids$ is valid. Let $\bids$ be the adversary's bid list, which is valid by Lemma~\ref{lemma:gadget-is-covered}. By Lemma~\ref{lemma:gadget-doesnt-leak}, we know that the gadget placed at $\xb$ does not affect demand outside the cube $\xb + [3]_0^n$. Hence by a standard decision tree argument, any algorithm requires $k^n-1$ queries to learn the location of the gadget $G$.
Note that the adversary places a total of $B \leq 2^{n+1} + 8kn^2$ gadget and boundary bids. Solving for~$k$ gives $k \geq \frac{B-2^{n+1}}{8n^2}$. Hence expressing the query complexity of $k^n-1$ in terms of $B$ and $n$ yields the desired expression.

For the case where $\bids$ may be invalid, we construct an even simpler adversarial game, where the adversary positions the island gadget as above but places no boundary bids. By the same argument, learning the location of the gadget incurs at least $\Omega \left( (\frac{M}{4} )^n \right)$ queries.
\end{proof}

Using the fact that $\max\{f,g\} = \Omega(f+g)$ for any positive functions $f$ and $g$, we can combine the lower bounds from Theorem~\ref{thm:simple-lower-bound} and Theorem~\ref{thm:pos-neg-lower-bound}, along with the upper bound from Algorithm~\ref{alg:pos-neg} in Section~\ref{sec:generalbids}, to identify the query complexity of expressing SS demand by means of positive and negative bids.

\begin{corollary}\label{cor:lower-bound}
    For constant $n$, learning SS demand requires $\Theta(B^n + B \log M)$ queries.
\end{corollary}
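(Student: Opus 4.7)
The plan is to combine the two lower bounds and the upper bound we already have in hand, with one small arithmetic step to turn a ``max'' into a ``sum'' at the cost of a constant factor. The three ingredients are: (i) Theorem~\ref{thm:simple-lower-bound}, giving $\Omega(B \log M)$; (ii) Theorem~\ref{thm:pos-neg-lower-bound}, giving $\Omega\bigl((\tfrac{B-2^{n+1}}{8n^2})^n\bigr)$ in the valid setting, which for any fixed $n$ simplifies to $\Omega(B^n)$ since $2^{n+1}$ and $8n^2$ are constants; and (iii) Theorem~\ref{thm:pos-neg-query-complexity}, giving the upper bound $O\bigl(Bn^2 \log M + 2^n n! \binom{Bn^2}{n}\bigr)$, which for fixed $n$ collapses to $O(B \log M + B^n)$ since $\binom{Bn^2}{n} = O(B^n)$.

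For the lower bound I would first argue that any algorithm that learns an arbitrary valid SS bid list must in particular solve both adversarial instances constructed for Theorems~\ref{thm:simple-lower-bound} and~\ref{thm:pos-neg-lower-bound}. Hence the query complexity is at least the maximum of the two bounds, $\Omega(\max\{B \log M, B^n\})$. Using the elementary identity $\max\{f,g\} \geq \tfrac{1}{2}(f+g)$ for nonnegative $f,g$ (explicitly flagged in the statement preceding the corollary), this rewrites as $\Omega(B \log M + B^n)$.

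For the matching upper bound I would quote Theorem~\ref{thm:pos-neg-query-complexity} and simplify under the assumption that $n$ is a constant: $n^2, 2^n, n!$ are absorbed into the constant, and $\binom{Bn^2}{n}$ is polynomial in $B$ of degree $n$. This yields $O(B \log M + B^n)$, which matches the lower bound. Putting the two together gives the claimed $\Theta(B^n + B \log M)$.

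There is no real obstacle here; the only thing to be careful about is making sure the two lower-bound instances from Theorems~\ref{thm:simple-lower-bound} and~\ref{thm:pos-neg-lower-bound} are genuinely instances of the same problem (learning a valid SS bid list), which they are: Theorem~\ref{thm:simple-lower-bound} uses only positive bids (trivially valid) and Theorem~\ref{thm:pos-neg-lower-bound}'s instance is valid by Lemma~\ref{lemma:gadget-is-covered}. So both serve as legitimate lower bounds for the same algorithmic task, and the corollary follows.
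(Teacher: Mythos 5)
Your proposal is correct and matches the paper's argument exactly: both combine the $\Omega(B \log M)$ lower bound from Theorem~\ref{thm:simple-lower-bound} and the $\Omega(B^n)$ lower bound from Theorem~\ref{thm:pos-neg-lower-bound} via the $\max\{f,g\} = \Omega(f+g)$ identity, then match against the Theorem~\ref{thm:pos-neg-query-complexity} upper bound specialised to constant $n$. Your explicit check that both adversarial instances are valid SS bid lists (and hence legitimate lower bounds for the same task) is a worthwhile observation that the paper leaves implicit.
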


\section{Conclusions}
Our algorithms for learning demand are conceptually simple and provide the first systematic approach for bidders to express their preferences in the bidding language used by the Product-Mix Auction. This allows bidders with non-technical backgrounds to participate in these auctions under the mild assumption that they are able to answer demand oracle queries. In the setting where demand can be expressed using positive bids only, our algorithm achieves linear query complexity. When demand may only be expressible using positive \emph{and} negative bids, our hyperplane finding algorithm performs well if the number of goods is not too large.
Further work could address extending our positive-bid algorithm to allow a small number of negative bids, approximate learning of the demand function, and dealing with errors in answers to queries.

\begin{acks}
    The authors thank their anonymous reviewers for insightful comments and suggestions. We are also grateful for valuable comments from Elizabeth Baldwin.
\end{acks}

\bibliographystyle{ACM-Reference-Format}
\bibliography{refs.bib}

\end{document}